\theoremstyle{remark}
\newtheorem{theorem}{Theorem}
\newtheorem{lemma}{Lemma}
\newtheorem{remark}{Remark}
\DeclareMathAlphabet\bmcal{OMS}{cmsy}{b}{n}
\newcommand{\colr}{\textcolor{red}}
\newcommand{\colb}{\textcolor{blue}}
\newcommand{\norm}[1]{\left\lVert#1\right\rVert}
\DeclareMathOperator*{\argmax}{arg\,max}
\def\BibTeX{{\rm B\kern-.05em{\sc i\kern-.025em b}\kern-.08em
    T\kern-.1667em\lower.7ex\hbox{E}\kern-.125emX}}
\begin{document}
%\title{Full-Duplex Cell-Free mMIMO Systems: Analysis and Optimization}
\title{FD Cell-Free mMIMO: Analysis and Optimization}
\vspace{-0.8in}
\author{
    \IEEEauthorblockN{Soumyadeep Datta\IEEEauthorrefmark{1}\IEEEauthorrefmark{2}, Ekant Sharma\IEEEauthorrefmark{1}, Dheeraj N. Amudala\IEEEauthorrefmark{1}, Rohit Budhiraja\IEEEauthorrefmark{1} and Shivendra S. Panwar\IEEEauthorrefmark{2}}
    \IEEEauthorblockA{\IEEEauthorrefmark{1}Indian Institute of Technology Kanpur \IEEEauthorrefmark{2}NYU Tandon School of Engineering
        \\\{sdatta, ekant, dheeraja, rohitbr\}@iitk.ac.in, \{sd3927, sp1832\}@nyu.edu
        }
        %\vspace{-0.4in}
}

\maketitle
%---------------------------------------------------------------
%\vspace{-2.2cm}
\begin{abstract}
%\vspace{-0.5cm}
We consider a  full-duplex cell-free massive multiple-input-multiple-output system with \textit{limited capacity fronthaul links}. We derive its downlink/uplink closed-form spectral efficiency (SE) lower bounds with maximum-ratio transmission/maximum-ratio combining and optimal uniform quantization. To {reduce} carbon footprint, this paper maximizes the non-convex weighted sum energy efficiency (WSEE) via  downlink and uplink power control, and successive convex approximation framework. {We  show that with low fronthaul capacity, the system requires a higher number of fronthaul quantization bits to achieve high SE and WSEE. For high fronthaul capacity, higher number of bits, however, achieves high SE but a reduced WSEE.}
\end{abstract}
%---------------------------------------------------------------

%---------------------------------------------------------------
\begin{IEEEkeywords}
Energy efficiency, full-duplex (FD),  fronthaul.
\end{IEEEkeywords}
%---------------------------------------------------------------

\IEEEpeerreviewmaketitle
%---------------------------------------------------------------
%\vspace{-0.4cm}
\section{Introduction} 
%\vspace{-0.2cm}
%Massive multiple-input-multiple-output (mMIMO) wireless systems employ a large number of antennas at the base-stations (BSs), and achieve higher spectral efficiency (SE) and energy efficiency (EE) with relatively simple signal processing \cite{MassiveMIMO1,CFvsSmallCells,CellFreeUC}.  Two distinct  mMIMO variants are being investigated in the literature: i) co-located, wherein all antennas are located at one place~\cite{MassiveMIMO1}; and ii) distributed, wherein  antennas are spread over a large area \cite{CFvsSmallCells,CellFreeUC}. While co-located mMIMO systems have a low fronthaul requirement, distributed mMIMO systems, at the cost of higher fronthaul infrastructure, have greater spatial diversity to exploit and consequently have greater immunity to shadow fading \cite{CFvsSmallCells}.

%Cell-free (CF) massive multiple-input-multiple-output (mMIMO) is one of the most promising distributed mMIMO variants in the current literature~\cite{CFvsSmallCells,CellFreeUC}. CF mMIMO envisions a communication region with no cell boundaries, where all access points (APs) communicate with all UEs. CF mMIMO promises substantial gains in SE and fairness over small-cell deployments~\cite{CFvsSmallCells,CellFreeUC}.

%CF mMIMO promises substantially higher SE and improved robustness to shadow fading than a conventional small\colb{-}cell deployment \cite{CFvsSmallCells}. 

Cell-free (CF) massive multiple-input-multiple-output (mMIMO) technology envisions communication without cell boundaries, where all access points (APs), connected via fronthaul links to a central processing unit (CPU), communicate with all the user equipments (UEs). It promises substantial gains in {spectral efficiency (SE)} and fairness over conventional small-cell deployments~\cite{CFvsSmallCells,CellFreeUC}. Full-duplex (FD) %wireless communication has recently been practically realized with advanced self-interference (SI) cancellation mechanisms~\cite{LIC}.  FD  
CF mMIMO is a relatively recent area of interest~\cite{FDCellFree,NAFDCellFree,FDCellFree2}, where APs simultaneously serve downlink and uplink UEs on the same spectral resource.
%Vu \textit{et al.} in~\cite{FDCellFree} considered a FD CF mMIMO system with maximum-ratio combining and showed that if SI at the APs is suppressed up to a certain limit, it has higher throughput than a FD co-located and half-duplex (HD)  CF mMIMO systems. Wang \textit{et al.} in~\cite{NAFDCellFree} evaluated the sum rate of a network-assisted FD CF mMIMO system using zero-forcing (ZF) and regularized zero-forcing (RZF) beamforming. Nguyen \textit{et al.} in~\cite{FDCellFree3} proposed a novel heap-based algorithm for pilot assignment to overcome pilot contamination in FD CF mMIMO systems.} In CF mMIMO, APs are connected to a central processing unit (CPU) using fronthaul links. 
The existing FD CF mMIMO literature assumes  high-capacity fronthaul links~\cite{FDCellFree,NAFDCellFree,FDCellFree2}. These links, however, have limited capacity, and the information needs to be quantized and sent over them. This aspect has only been investigated for HD CF mMIMO~\cite{CellFreeMaxMinUQ,CellFreeEEUQ}. %The quantization distorts the signal, which  is modeled in two different ways: i) as an additive quantization noise term~\cite{CellFreeLimFronthaul, CellFreeLimFronthaul2}, and ii) as both attenuation and additive noise using the Bussgang decomposition for an optimal uniform quantizer~\cite{CellFreeMaxMinUQ}. %Femenias \textit{et al.} in \cite{CellFreeLimFronthaul} formulated a max-min uplink/downlink power allocation problem for HD CF mMIMO with limited capacity fronthaul, while Masoumi \textit{et al.} in~\cite{CellFreeLimFronthaul2} optimized the SE of a HD CF mMIMO uplink with limited-capacity  fronthaul and hardware impairments. Bashar \textit{et al.} in~\cite{CellFreeMaxMinUQ} derived the SE of HD CF mMIMO uplink with limited-capacity fronthaul. 
%We  consider the quantization-related limitations in the fronthaul of a FD CF mMIMO system to derive achievable SE expressions. %To the best of our knowledge, ours is the first work to do so. 
%Bashar \textit{et al\colb{.}} in \cite{CellFreeUQDL} used a deep convolutional neural network to maximize the sum rate of a limited-fronthaul CF mMIMO uplink. 
%\textbf{Motivation and Related Work:} 
%With increase in network traffic, t

{Energy efficiency (EE)} metric is now being commonly used to design  modern wireless systems \cite{fractionalprogrammingbook}. 
The global EE (GEE) metric, defined as the ratio of the network {throughput} and its total energy consumption, has been used to design CF mMIMO systems~\cite{CellFreeEE,CellFreeEEUQ,FDCellFree2}. A UE with limited energy availability will accord a much higher importance to its EE than an another UE with a sufficient energy supply.  The network-centric GEE metric cannot accommodate such heterogeneous EE requirements~\cite{fractionalprogrammingbook}. %Ngo \textit{et al.} in~\cite{CellFreeEE} optimized the GEE for the downlink  of HD CF mMIMO system.  Bashar \textit{et al.} in~\cite{CellFreeEEUQ} optimized the uplink GEE of a HD CF mMIMO system with optimal uniform fronthaul quantization. Alonzo \textit{et al.} in~\cite{CellFreeEEmmW} optimized the GEE of CF and UE-centric HD  mMIMO deployments in the mmWave regime. Nguyen \textit{et al.} in~\cite{FDCellFree2} maximized a novel SE-GEE metric for the FD CF mMIMO system using a Dinkelbach-like algorithm.% that requires instantaneous channel knowledge.
The weighted sum energy efficiency (WSEE) metric, defined as the weighted sum of individual EEs~\cite{fractionalprogrammingbook}, {is better suited~\cite{fractionalprogrammingbook,WSEEEfrem}}. The current work, unlike the existing CF mMIMO literature~\cite{CellFreeEE,CellFreeEEUQ,FDCellFree2} optimizes WSEE. %(\colr{We need to point out the gaps in the existing literature and then discuss our contributions. The current move to discuss contributions is abrupt and consequently non-impressive.})

We next list our \textbf{main} contributions in this context:\newline
%Tentu \textit{et al\colb{.}} in \cite{ADMMVenkatesh} apply asynchronous ADMM (a-ADMM), noted for its robustness to failures and delays, to maximize the WSEE of a multi-pair two-way HD mMIMO relay.
%Reference \cite{FDCellFree} considers a FD CF mMIMO system with MRC/MRT processing and perfect fronthaul having high capacity. Reference \cite{CellFreeMaxMinUQ} considers the uplink of a CF mMIMO system with MRC/MRT processing and limited fronthaul, and models the quantization by the ADCs in the fronthaul with optimal uniform quantization. We use ideas from these references to propose a model of our system and
%Reference \cite{CellFreeEEUQ} maximizes the GEE of the CF mMIMO uplink with limited fronthaul and optimal uniform quantization and provides a practical power consumption model particularly for a quantized limited fronthaul link. Using ideas from these references, we propose a suitable power consumption model for our system and use it along with the derived SE expressions to formulate the WSEE maximization problem.
1) We consider a FD CF mMIMO system with maximal ratio combining/maximal ratio transmission (MRC)/(MRT) at the APs,  and a limited fronthaul with optimal uniform quantization. This is unlike the existing works on FD CF mMIMO~\cite{FDCellFree,NAFDCellFree,FDCellFree2}, which consider perfect high-capacity fronthaul. We derive achievable SE expressions for {uplink and downlink UEs considering multiple antennas at each AP.} 
%We use the derived SE expression to maximize the non-convex WSEE metric. While energy-efficient design of CF mMIMO systems have been studied in literature~\cite{FDCellFree2,CellFreeEE,CellFreeEEUQ,CellFreeEEmmW}, most of them focus on the GEE metric, except reference~\cite{FDCellFree2}. The GEE, being a single ratio, can be expressed as a pseudo-concave (PC) function and can thus be maximized using Dinkelbach's algorithm~\cite{fractionalprogrammingbook}. Reference \cite{FDCellFree2}  is the only work so far which optimized the EE of FD CF mMIMO. It considered a novel SE-GEE objective, which also reduces to a PC function and is maximized using a Dinkelbach-like algorithm. The WSEE, in contrast, is a sum of PC functions, and is not guaranteed to be a PC function~\cite{fractionalprogrammingbook}. This makes the WSEE an extremely non-trivial objective to maximize \cite{fractionalprogrammingbook}. Further, the algorithm in \cite{FDCellFree2}  requires  knowledge of instantaneous small-scale channel fading coefficients. The WSEE metric optimized here, in contrast,  requires large-scale channel coefficients, which  remain constant for multiple coherence intervals~\cite{CFvsSmallCells}.  
\newline
{2) We maximize the non-convex WSEE using  successive convex approximation (SCA)-based iterative approach.  We first equivalently recast the  WSEE objective and then locally approximate it as a generalized convex program (GCP).} \newline % to be solved iteratively using SCA.\newline
\vspace{-0.4cm}
\section{System model}\label{model}
%\vspace{-0.2cm}
%We consider  a FD CF mMIMO system \colb{as shown in Fig. \ref{fig:0}}, where $M$ FD APs serve on the same spectral resource two sets of single-antenna HD UEs: the uplink UE set, $\kappa_u \triangleq \{1, \dots, K_u\}$, and the downlink UE set, $\kappa_d \triangleq \{1, \dots, K_d\}$. We denote $K = K_u + K_d$ and the complete set of APs as $\mathcal{M} \triangleq \{1,\dots,M\}$. Each AP in the set $\mathcal{M}$ transmits to the downlink UEs with $N_t$ antennas and receives from the uplink UEs with $N_r$ antennas. The APs are connected to the CPU using limited-capacity fronthaul links which use optimal uniform quantization \cite{CellFreeMaxMinUQ, CellFreeEEUQ}. These links carry quantized uplink/downlink information to/from the CPU, whose details will be given later. Due to the FD model, \newline 
We consider, as shown in Fig. \ref{fig:0}, a FD CF mMIMO system  where $M$ FD APs serve $K$ single-antenna HD UEs on the same spectral resource, comprising $K_u$ UEs on the uplink and $K_d$ UEs on the downlink, with $K = (K_u + K_d)$. Each AP has $N_t$ transmit antennas and $N_r$ receive antennas, and is connected to the CPU using a limited-capacity fronthaul link, %which uses optimal uniform quantization \cite{CellFreeMaxMinUQ, CellFreeEEUQ}. These links carry quantized uplink/downlink information to/from the CPU, whose details will be given later.
which carries quantized uplink/downlink information to/from the CPU. We see from Fig. \ref{fig:0} that due to the FD model, \newline 
    %$\bullet$ downlink UEs in the set $\kappa_d$ receive transmit signals from the uplink UEs in the set $\kappa_u$, causing uplink downlink interference (UDI).\newline
     %$\bullet$ uplink receive signal of each AP in the set $\mathcal{M}$ is interfered by its own transmit signal on the downlink (loop interference) and the transmit signal of other APs. It, similar to \cite{FDCellFree}, cancels this interference   and experiences residual interference (RI). (\colr{should we justify this more}) Additionally, both uplink and downlink signals experience multi-UE interference (MUI) as the APs simultaneously serve them in the same spectral resource. 
     $\bullet$ uplink receive signal of each AP is interfered by its own downlink transmit signal  (intra-AP) and other APs (inter-AP).  \newline %(shown as purple (resp. brown) lines on (resp. between) APs). \newline
     $\bullet$ downlink UEs receive transmit signals from uplink UEs, causing uplink downlink interference (UDI). %(shown as black dotted lines between uplink and downlink UEs). 
      %(shown as brown dotted loops on APs and brown dotted lines between APs, respectively).
     %$\bullet$ \colb{uplink receive signal of each AP is interfered by its own transmit signal on the downlink, termed self interference (SI), and the transmit signal of other APs, termed residual interference (RI). (shown as brown dotted loops on APs and brown dotted lines between APs, respectively)}.
     %     We consider a quasi-static channel which is constant over one coherence interval but varies independently across them. 
     
  %We next explain various channels, their estimation and data transmission. 
  Additionally, the UEs experience multi-UE interference (MUI) as the APs serve them on the same spectral resource. 
  \vspace{-0.8cm}
%We next explain the uplink channel estimation, followed by the simultaneous  uplink and downlink data transmission. We also explain the crucial aspect of AP selection which arises due to limited fronthaul capacity and quantization. 
%\vspace{-0.5cm}
\begin{figure}[htbp]
	\centering
	\includegraphics[width = 0.8\textwidth]{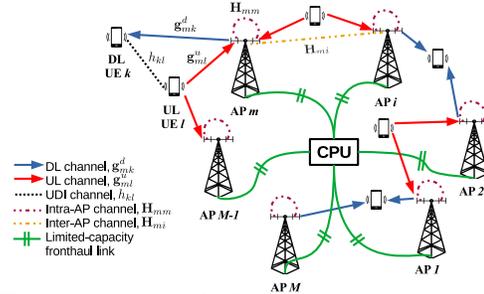}
	\vspace{-0.8cm}
	\caption{System Model for FD CF mMIMO communications\vspace{-0.0cm}}
	\label{fig:0}
\end{figure}
%
%\vspace{-.2cm}
\textbf{Channel description:} %\label{ch_model}
%\vspace{-0.2cm}
The vector $\bm{g}^{d}_{mk} \in \mathbb{C}^{N_t \times 1}$ is the channel from the $k$th downlink UE to the transmit antennas of the $m$th AP, and $\bm{g}^{u}_{ml} \in \mathbb{C}^{N_r \times 1}$ is the channel from the $l$th uplink UE to the receive antennas of the $m$th AP\footnote{We, henceforth, consider $k = 1 \text{ to } K_d, l = 1 \text{ to } K_u$ and $m = 1 \text{ to } M$, to avoid repetition, unless mentioned otherwise.}. We model them as $\bm{g}^{d}_{mk} = (\beta^{d}_{mk})^{1/2} \Tilde{\bm{g}}^{d}_{mk}$ and  $\bm{g}^{u}_{ml} = (\beta^{u}_{ml})^{1/2} \Tilde{\bm{g}}^{u}_{ml}$. Here $\beta^{d}_{mk}$ and $ \beta^{u}_{ml} \in \mathbb{R}$ are corresponding large scale fading coefficients, which are same for all antennas at the $m$th AP~\cite{CFvsSmallCells, FDCellFree}. The vectors $\Tilde{\bm{g}}^{d}_{mk}$ and $\Tilde{\bm{g}}^{u}_{ml}$ denote small scale fading  with independent and identically distributed (i.i.d.) $\mathcal{CN}(0,1)$ entries. The UDI channel between the $k$th downlink UE and $l$th uplink UE is modeled similar to~\cite{FDCellFree, NAFDCellFree}, as
%\vspace{-0.1cm}
%\begin{equation} \label{interUEchannel}
$h_{kl} = (\Tilde{\beta}_{kl})^{1/2} \Tilde{h}_{kl}$, %\, %\forall k,l,$
%\end{equation}
%\vspace{-0.3cm}
where $\Tilde{\beta}_{kl}$ is the large scale fading coefficient and $\Tilde{h}_{kl} \sim \mathcal{CN}(0,1)$ is the small scale fading. The  inter- and intra-AP channels from the transmit antennas of the $i$th AP to the receive antennas of the $m$th AP are denoted as $\bm{H}_{mi} \in \mathbb{C}^{N_r \times N_t},\, i=1 \text{ to } M$.
\textbf{Uplink channel estimation:} \label{ul_ch_est}
We assume a coherence interval of duration $T_c$ (in s) with $\tau_c$ samples, which  is divided into: a) channel estimation phase of $\tau_t$ samples, and b) downlink and uplink data transmission of ($\tau_c$ - $\tau_t$) samples. We consider $\tau^d_t$ and $\tau^u_t$ pilots for the downlink and uplink UEs, respectively, with $\tau_t = \tau^d_t + \tau^u_t$.  All the downlink (resp. uplink) UEs, on the same spectral resource, simultaneously transmit $\tau^d_t$ (resp. $\tau^u_t$)-length uplink pilots to the APs, which estimate the channel to perform MRT (resp. MRC). %In this phase, both transmit and receive antenna arrays of each AP, similar to~\cite{FDCellFree}, operate in receive mode. 
The $k$th downlink UE (resp. $l$th uplink UE) transmits pilot signals $\sqrt{\tau^d_t}\bm{\varphi}^{d}_{k} \in \mathbb{C}^{\tau^d_t \times 1}$ (resp. $\sqrt{\tau^u_t}\bm{\varphi}^{u}_{l} \in \mathbb{C}^{\tau^u_t \times 1}$). We assume that the pilots  i) have unit norm i.e., $\norm{\bm{\varphi}^{u}_{l}} = \norm{\bm{\varphi}^{d}_{k}} = 1$; {ii) have equal normalized transmit signal-to-noise ratio (SNR), $\rho_t$}; and iii) are intra-set orthonormal i.e. %\scriptsize
$(\bm{\varphi}^{u}_{l})^{H} \bm{\varphi}^{u}_{l'} = 0\, \forall l \neq l' \text { and  } (\bm{\varphi}^{d}_{k})^{H} \bm{\varphi}^{d}_{k'} = 0\, \forall  k \neq k'$~\cite{CellFreeEE,FDCellFree}. %, and  iii) inter-set  orthonormal i.e.,  $(\bm{\varphi}^{u}_{l})^{H} \bm{\varphi}^{d}_{k} = 0\, \forall l,k$. %\forall k \in \kappa_d,l \in \kappa_u
Therefore, $\tau^d_t \geq K_d$ and $\tau^u_t \geq K_u$~\cite{FDCellFree}. 

The pilots transmitted by the downlink (resp. uplink) UEs are received by the transmit (resp. receive) antennas of the APs. The transmit antennas operate in the receive mode while estimating channel~\cite{FDCellFree}. The APs then project the received signal on the pilot signals to compute the linear minimum-mean-squared-error (MMSE) channel estimates~\cite{FDCellFree}, denoted as $\hat{\bm{g}}^{d}_{mk}$ and 
$\hat{\bm{g}}^{u}_{ml}$, respectively. The estimation error vectors are defined as $\bm{e}^{u}_{ml} \!\triangleq\! \bm{g}^{u}_{ml} \!\!-\!\! \hat{\bm{g}}^{u}_{ml}$ and  $\bm{e}^{d}_{mk} \!\triangleq\! \bm{g}^{d}_{mk} \!\!-\!\! \hat{\bm{g}}^{d}_{mk}$. With MMSE  estimation, $\hat{\bm{g}}^{d}_{mk}, \bm{e}^{d}_{mk}$ and $\hat{\bm{g}}^{u}_{ml}, \bm{e}^{u}_{ml}$ are mutually independent and their individual terms are i.i.d. with pdf $\mathcal{CN}(0,\gamma^{d}_{mk}),\mathcal{CN}(0,\beta^{d}_{mk}\!\!-\!\!\gamma^{d}_{mk}), \mathcal{CN}(0,\gamma^{u}_{ml}), \mathcal{CN}(0, \beta^{u}_{ml} \!\!-\!\! \gamma^{u}_{ml})$ 
respectively, with $\gamma^{d}_{mk} \!\!=\!\! \frac{\tau^d_t \rho_t (\beta^{d}_{mk})^2}{\tau^d_t \rho_t \beta^{d}_{mk} \!+\! 1}$ and $\gamma^{u}_{ml} \!\!=\!\! \frac{\tau^u_t \rho_t (\beta^{u}_{ml})^2}{\tau^u_t \rho_t \beta^{u}_{ml} \!+\! 1}$~\cite{FDCellFree}. %\colr{Here $\rho_t$ is normalized pilot transmit signal-to-noise-ratio (SNR).} %We now explain %downlink and uplink data 
\textbf{Downlink data transmission:}
%\textbf {Downlink transmission from the APs:} 
%For the $k$th downlink UE, the CPU chooses a message symbol $s^d_k$ with $\mathbb{E}\{|s^d_k|^2\} = 1$. The symbols of different downlink UEs are i.i.d. \colb{$\sim \mathcal{CN}(0,1)$}. It then allocates power by multiplying $s^d_k$, to be sent to the $m$th AP, with a power control coefficient $\eta_{mk}$. 
The CPU chooses  message symbol with pdf  $\mathcal{CN}(0,1)$ for the $k$th downlink UE, which is denoted as $s^d_k$. {To send this symbol to the $m$th AP via the limited-capacity fronthaul, the CPU first multiplies $s^d_k$ with a power-control coefficient $\eta_{mk}$, and then quantizes it.}  The $m$th AP uses the MMSE channel estimates to perform MRT on the quantized signal, and generates its transmit signal as following
%\vspace{-0.3cm}
\begin{align} \label{dltransmitsignal}
   \bm{x}^{d}_{m} %&= \sqrt{\rho_d} \sum_{k \in \kappa_{dm}}  (\hat{\bm{g}}^{d}_{mk})^{*} \mathcal{Q}(\sqrt{\eta_{mk}} s^{d}_{k}) \nonumber \\
   &= \sqrt{\rho_d} \sum_{k \in \kappa_{dm}}  (\hat{\bm{g}}^{d}_{mk})^{*} (\Tilde{a}\sqrt{\eta_{mk}} s^{d}_{k} + \varsigma^{d}_{mk}).
\end{align}
%\vspace{-0.2cm}
Here $\rho_d$ is the normalized maximum transmit SNR at each AP. {The $m$th AP, due to limited fronthaul capacity, serves only the UEs in the subset $\kappa_{dm} \subset \{1, \dots, K_d\}$, as discussed later in Section~\ref{apsellimfh}}. The CPU sends downlink symbols for UEs in the set $\kappa_{dm}$. %only a , denoted by %$\kappa_{dm} \subset  \kappa_d \triangleq \{1, \dots, K_d\}$, %, and the CPU sends it only the precoded downlink symbols of those UEs. This aspect is discussed in detail in Section \ref{apsellimfh}. 
%The function $\mathcal{Q}(\cdot)$ denotes 
{The quantization operation is modeled} as a multiplicative attenuation, $\Tilde{a}$, and an additive distortion, $\varsigma^{d}_{mk}$, for the $k$th downlink UE in the fronthaul link between the CPU and the $m$th AP~\cite{CellFreeMaxMinUQ,CellFreeEEUQ}. {The constants $\Tilde{a}$ and $\Tilde{b}$ depend on the number of fronthaul quantization bits, $\nu_m$, and $\mathbb{E}\{\left(\varsigma^{d}_{mk}\right)^{2}\} = (\Tilde{b}-\Tilde{a}^2) \mathbb{E}\left\{|\sqrt{\eta_{mk}} s^d_k|^2\right\} =  (\Tilde{b}-\Tilde{a}^2)\eta_{mk}$~\cite{CellFreeMaxMinUQ}.} %\begin{equation} \label{dluqterm}
%p_{\varsigma^{d}_{mk}} = \mathbb{E}\{\left(\varsigma^{d}_{mk}\right)^{2}\} = \left(\Tilde{b}-\Tilde{a}^2\right)\eta_{mk}.
%\end{equation}
%The $m$th AP satisfies 
Satisfying the average transmit SNR constraint, $\mathbb{E}\{\|\bm{x}^{d}_{m}\|^{2}\} \leq \rho_d$, we get
%as %follows
%\vspace{-0.5cm}
\begin{equation}
\rho_d \Tilde{b} \sum_{k \in \kappa_{dm}} \!\!\!\! \eta_{mk} \mathbb{E}\{\|\hat{\bm{g}}^d_{mk}\|^{2}\} \leq \rho_d \Rightarrow \Tilde{b} \sum_{k \in \kappa_{dm}} \!\!\!\! \gamma^{d}_{mk} \eta_{mk} \leq \frac{1}{N_t}.\!\! \label{cons1} %[\because \hat{\bm{g}}^d_{mk} \text{ entries i.i.d} \sim \mathcal{CN}(0, \gamma^d_{mk})]\label{cons1} %\, \forall m \label{cons1}.
\end{equation}
%\vspace{-0.2cm}
\textbf{Uplink data transmission:} The $K_u$ uplink UEs also transmit simultaneously to all the $M$ APs on the same spectral resource. The $l$th uplink UE transmits its signal $x^{u}_{l} = \sqrt{\rho_u \theta_l} s^{u}_{l}$ with  
{$s^u_l \sim \mathcal{CN}(0,1)$} being its  message symbol, $\rho_u$  being the  maximum uplink transmit SNR and $\theta_l$ being the power control coefficient. To {satisfy} the average {transmit} SNR constraint %\colb{we get}
%\vspace{-0.3cm}
\begin{equation}
\mathbb{E}\{|x^{u}_{l}|^{2}\} \leq \rho_u \Leftrightarrow 0 \leq \theta_l \leq 1. \label{cons2} %\, \forall l. \label{cons2}
\end{equation}

The FD APs not only receive uplink UE signals but also their own  downlink transmit signals  and that of other APs,  referred to as intra-AP and inter-AP interference. The intra and inter-AP interference channels vary extremely slowly and the FD APs can thus estimate them with very low pilot overhead~\cite{NAFDCellFree}. We assume that these channel estimates are imperfect. The receive antenna array of each AP, similar to~\cite{FDCellFree, NAFDCellFree}, can therefore only partially mitigate the intra- and inter-AP interference. The residual intra-/inter-AP interference (RI) channel $\bm{H}_{mi} \in \mathbb{C}^{N_r \times N_t}, \, i = 1 \text{ to } M$, similar to \cite{FDCellFree, NAFDCellFree},  is modeled as a Rayleigh faded channel with entries i.i.d.$\sim \mathcal{CN}(0, \gamma_{\text{RI},mi})$. Here $\gamma_{\text{RI},mi} \triangleq \beta_{\text{RI},mi} \gamma_{\text{RI}}$, with $\beta_{\text{RI},mi}$ being the large scale fading coefficient from the $i$th AP to the $m$th AP, and $\gamma_{\text{RI}}$ being the RI power after interference cancellation. Using \eqref{dltransmitsignal}, received uplink signal {at the $m$th AP:} %is %expressed as
%therefore expressed as
%\vspace{-0.2cm}
\begin{align} 
    \notag \bm{y}^{u}_{m} \!&=\! \sum_{l=1}^{K_u} \!\bm{g}^{u}_{ml} x^{u}_{l} \! + \! \sum_{i=1}^{M} \!\bm{H}_{mi} \bm{x}^{d}_{i} \!+\! \bm{w}^{u}_{m} \!= \! \sqrt{\rho_u} \! \sum_{l=1}^{K_u} \!  \bm{g}^{u}_{ml} \sqrt{\theta_l} s^{u}_{l} \\
    &\quad+ \! \sqrt{\rho_d} \sum_{i=1}^{M} \sum_{k \in \kappa_{di}} \!\!\! \bm{H}_{mi} (\hat{\bm{g}}^{d}_{ik})^{*} (\Tilde{a} \sqrt{\eta_{ik}} s^{d}_{k} + \varsigma^{d}_{ik}) \! + \! \bm{w}^{u}_{m}.\!\! \label{rxulsignal}
\end{align}
Here $\bm{w}^{u}_{m} \in \mathbb{C}^{N_r \times 1}$ is the additive receiver noise at the $m$th AP with i.i.d. entries $\sim \mathcal{CN}(0,1)$. \newline
%\vspace{-0.4cm}
\textbf{Decoding of received signal on the downlink and uplink:} The $k$th downlink UE receives its desired message signal from a subset of all APs, denoted as $\mathcal{M}^{d}_{k} \subset \{1, \dots, M\}$. %, along with various interference and distortion components. 
The $m$th AP serves the $k$th downlink UE iff $k \in \kappa_{dm} \Leftrightarrow m \in \mathcal{M}^d_k$. 

The $m$th AP performs MRC on the received signal, $\bm{y}^{u}_{m}$, for the $l$th uplink UE using $(\hat{\bm{g}}^{u}_{ml})^{H}$. It quantizes the combined signal before sending it to CPU%. The quantization operation is mathematically 
, modeled using a constant attenuation $\Tilde{a}$, and additive distortion, $\varsigma^{u}_{ml}$, for the $l$th uplink UE in the fronthaul link between the $m$th AP and the CPU, with $\mathbb{E}\{\left(\varsigma^{u}_{ml}\right)^{2}\} \!\!=\!\! (\Tilde{b} \!\!-\!\! \Tilde{a}^{2})\mathbb{E}\left\{\big|(\bm{g}^{u}_{ml})^{H}\bm{y}_m\big|^{2}\right\}$~\cite{CellFreeMaxMinUQ}. The CPU receives signals for the $l$th uplink UE %from a subset of the APs, denoted as 
{from the APs in the subset $\mathcal{M}^{u}_{l} \subset \{1, \dots, M\}$,} %This is 
due to limited fronthaul capacity, as discussed in detail later in Section \ref{apsellimfh}. We denote the subset of uplink UEs served by the $m$th AP as $\kappa_{um} \subset \{1, \dots, K_u\}$. The $m$th AP serves the $l$th uplink UE iff $l \in \kappa_{um} \Leftrightarrow m \in \mathcal{M}^u_l$.

Using~\eqref{dltransmitsignal}-\eqref{rxulsignal}, the signal received by the $k$th downlink UE and by CPU for the $l$th uplink UE are~\eqref{dlsignal} and~\eqref{ulsignal}, respectively.
\begin{figure*}
\begin{align}
    \notag r^{d}_{k}& = \sum_{m=1}^{M} (\bm{g}^{d}_{mk})^{T} \bm{x}^{d}_{m} + \sum_{l=1}^{K_u} h_{kl} x^{u}_{l} + w^{d}_{k} = \underbrace{\Tilde{a}\sqrt{\rho_d} \sum_{m \in \mathcal{M}^{d}_{k}} \sqrt{\eta_{mk}}(\bm{g}^{d}_{mk})^{T}  (\hat{\bm{g}}^{d}_{mk})^{*} s^{d}_{k}}_{\text{message signal}} + \underbrace{\Tilde{a}\sqrt{\rho_d} \sum_{m=1}^{M} \sum_{q \in \kappa_{dm} \setminus k} \sqrt{\eta_{mq}}(\bm{g}^{d}_{mk})^{T}  (\hat{\bm{g}}^{d}_{mq})^{*} s^{d}_{q}}_{\text{multi-UE interference, MUI}^{d}_{k}} \\
    & \quad + \underbrace{\sum_{l=1}^{K_u} \sqrt{\rho_u} \sqrt{\theta_l} h_{kl}  s^{u}_{l}}_{\text{uplink downlink interference, UDI}^{d}_{k}} + \underbrace{\sqrt{\rho_d} \sum_{m=1}^{M} \sum_{q \in \kappa_{dm}} (\bm{g}^{d}_{mk})^{T}  (\hat{\bm{g}}^{d}_{mq})^{*} \varsigma^{d}_{mq}}_{\text{total quantization distortion, TQD}^{d}_{k}} + \underbrace{w_{k}^{d}}_{\text{AWGN at receiver}}, \label{dlsignal} \\
    \notag r^{u}_{l} &= \underbrace{\Tilde{a}\sum_{m \in \mathcal{M}^{u}_{l}} \sqrt{\rho_u} \sqrt{\theta_l} (\hat{\bm{g}}^{u}_{ml})^{H} \bm{g}^{u}_{ml} s_{l}^{u}}_{\text{message signal}} + \underbrace{\Tilde{a} \sum_{m \in \mathcal{M}^{u}_{l}} \sum_{q=1, q \neq l}^{K_u} \sqrt{\rho_u} \sqrt{\theta_q} (\hat{\bm{g}}^{u}_{ml})^{H} \bm{g}^{u}_{mq} s_{q}^{u}}_{\text{multi-UE interference, MUI}^{u}_{l}} \\
    &\quad+\! \underbrace{\Tilde{a}  \sum_{m \in \mathcal{M}^{u}_{l}} \sum_{i=1}^{M}  \sqrt{\rho_d} \sum_{k \in \kappa_{di}} (\hat{\bm{g}}^{u}_{ml})^{H} \bm{H}_{mi} (\hat{\bm{g}}^{d}_{ik})^{*} (\Tilde{a} \sqrt{\eta_{ik}} s^{d}_{k} + \varsigma^{d}_{ik})}_{\text{residual interference (intra-AP and inter-AP), RI}^{u}_{l}} +  \underbrace{\Tilde{a}  \sum_{m \in \mathcal{M}^{u}_{l}} (\hat{\bm{g}}^{u}_{ml})^{H} \bm{w}^{u}_{m}}_{\text{AWGN at APs, N}^{u}_{l}} +  \underbrace{\sum_{m \in \mathcal{M}^{u}_{l}} \varsigma^{u}_{ml}}_{\text{total quantization distortion, TQD}^u_l}. \label{ulsignal}
\end{align}
\hrulefill
\vspace{-0.8cm}
\end{figure*}
\textbf{Quantization, limited fronthaul and AP selection:}\label{apsellimfh}
%\vspace{-0.2cm}
The fronthaul link between the $m$th AP and the CPU uses $\nu_{m}$ bits to quantize the real and imaginary parts of transmit signal of the {$k$th downlink UE, i.e.,  $\sqrt{\eta_{mk}} s^d_k$, and the uplink receive signal after MRC, i.e., $(\hat{\bm{g}}^{u}_{ml})^{H} \bm{y}^u_m$.} %The $m$th AP uses $\nu_{m}$ bits to quantize \colb{both the} real and imaginary parts of the input signal\colb{, i.e., $2\nu_{m}$ bits in all}. 
Due to the limited-capacity fronthaul, the $m$th AP serves only $K_{um} (\triangleq |\kappa_{um}|)$ and $K_{dm} (\triangleq |\kappa_{dm}|)$ UEs on the uplink and downlink, respectively~\cite{CellFreeMaxMinUQ,CellFreeEEUQ}.  For each UE, we recall that there are $(\tau_c - \tau_t)$ data samples in each coherence interval of duration $T_c$. 

The fronthaul data rate between the $m$th AP and the CPU:
%, in bps (bits per second), is %therefore
%\vspace{-0.1cm}
\begin{equation} \label{fhrate}
    R_{\text{fh},m} = \frac{2\nu_m(K_{dm} + K_{um}) (\tau_c - \tau_t)}{T_c}. 
\end{equation}
%The fronthaul link between the $m$th AP and the CPU has capacity $C_{\text{fh},m}$ which implies that
The fronthaul capacity for the $m$th AP is $C_{\text{fh},m}$, implying
%\vspace{-0.1cm}
\begin{equation} \label{limfronthaul}
R_{\text{fh},m} \leq C_{\text{fh},m} \Rightarrow   \nu_m \cdot (K_{um} + K_{dm}) \leq \frac{C_{\text{fh},m} T_c}{2(\tau_c - \tau_t)}. %, \quad \forall m \in \mathcal{M}.
\end{equation}
{To ensure uplink and downlink fairness for the FD system, we set the same maximum limit to the number of uplink as well as downlink UEs an AP can serve, which satisfies \eqref{limfronthaul}.} %We arrive at the following lemma.
%To maximize $K_{um}$, it is possible to set $K_{dm} = 0$, and vice-versa. However, to ensure uplink and downlink fairness for the FD system, we set the same maximum limit to the number of uplink as well as downlink UEs an AP can serve, which satisfies \eqref{limfronthaul}. We arrive at the following lemma.
\vspace{-0.2cm}
\begin{lemma}
The maximum number of uplink and downlink UEs served by the $m$th AP when connected via a limited optical fronthaul to the CPU with capacity $C_{\text{fh},m}$ is given as 
\begin{equation} \label{maxUEs}
    \Bar{K}_{um} = \Bar{K}_{dm} =  \left \lfloor{\frac{C_{\text{fh},m} T_c}{4 (\tau_c - \tau_t) \nu_m}}\right \rfloor. %\quad \forall m \in \mathcal{M}.
\end{equation}
\end{lemma}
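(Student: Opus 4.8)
The plan is to derive \eqref{maxUEs} directly from the fronthaul rate expression \eqref{fhrate} and its capacity constraint \eqref{limfronthaul}, together with the stated fairness requirement that each AP be allowed to serve the same maximum number of uplink and downlink UEs. First I would recall from \eqref{fhrate} that the fronthaul traffic generated by the $m$th AP scales linearly in the total number of served UEs, $K_{dm}+K_{um}$, the per-sample quantization budget $2\nu_m$, and the number of data samples $(\tau_c-\tau_t)$ per coherence block, normalized by $T_c$; the factor $2$ accounts for separately quantizing the real and imaginary parts. This is the only structural fact needed.

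Next I would impose the fairness constraint explicitly: setting $K_{um}=K_{dm}\triangleq K'$ and substituting into \eqref{fhrate} gives $R_{\text{fh},m}=\tfrac{4\nu_m K'(\tau_c-\tau_t)}{T_c}$. Feeding this into $R_{\text{fh},m}\le C_{\text{fh},m}$ from \eqref{limfronthaul} and solving for $K'$ yields $K'\le \tfrac{C_{\text{fh},m}T_c}{4(\tau_c-\tau_t)\nu_m}$, where the inequality is well posed since $\tau_c>\tau_t$ (there must be at least one data sample). Note that this is exactly \eqref{limfronthaul} specialized to the balanced case; equivalently one can observe that the largest $K'$ satisfying \eqref{limfronthaul} under the equality $K_{um}=K_{dm}$ is obtained by allocating half of the right-hand side of \eqref{limfronthaul} to each direction.

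Finally I would invoke integrality: the number of served UEs $K'$ must be a non-negative integer, so the maximum feasible value is the floor of the real-valued bound, i.e., $\bar{K}_{um}=\bar{K}_{dm}=\big\lfloor \tfrac{C_{\text{fh},m}T_c}{4(\tau_c-\tau_t)\nu_m}\big\rfloor$, which is \eqref{maxUEs}. To close the argument I would note that this value is simultaneously feasible (it satisfies \eqref{limfronthaul} by monotonicity of the floor) and maximal (any larger integer violates the bound), so it is indeed the maximum.

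There is no real technical obstacle here: the statement follows from substitution into \eqref{fhrate}--\eqref{limfronthaul} plus a rounding step. The only points worth stating carefully are (i) the justification of the fairness substitution $K_{um}=K_{dm}$, which is a design choice announced just before the lemma rather than something to be derived, and (ii) the implicit assumption $\tau_c-\tau_t>0$ so the bound is finite and positive; I would mention both in passing so the short proof does not look like it is skipping a step.
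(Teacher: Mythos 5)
Your proof is correct and follows essentially the same route as the paper: substitute the fairness condition $K_{um}=K_{dm}$ into the fronthaul rate constraint \eqref{limfronthaul}, solve for the common value, and apply the floor function to enforce integrality. The extra remarks on feasibility/maximality of the floor and on $\tau_c-\tau_t>0$ are fine but not needed beyond what the paper already does.
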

%where $\lfloor{\cdot}\rfloor$ denotes the floor function.
\vspace{-0.1cm}
\begin{proof}
Let $\Bar{K}_{um}$ and $\Bar{K}_{dm}$ denote the maximum number of uplink and downlink UEs served by the $m$th AP. Consider $\Bar{K}_{um} = \Bar{K}_{dm}$ for uplink downlink fairness. Using \eqref{limfronthaul}, 
%\vspace{-0.1cm}
\begin{equation}
    \notag \Bar{K}_{um} = \Bar{K}_{dm} \leq \frac{C_{\text{fh},m} T_c}{4(\tau_c - \tau_t) \nu_m}. %\quad \forall m \in \mathcal{M}.
\end{equation}
The lemma follows from definition of floor function $\lfloor{ \cdot }\rfloor$.
\end{proof}
%\vspace{-0.2cm}
Using the maximum limits obtained in~\eqref{maxUEs}, we assign $K_{um} \!\! = \!\! \min \{K_u,\! \Bar{K}_{um}\}, K_{dm} \!\! = \!\! \min \{K_d,\! \Bar{K}_{dm}\}$. 
%We see that the constraint imposed in~\eqref{limfronthaul} is similar to a user-centric CF mMIMO system, where each UE is served by a subset of the APs~\cite{CellFreeUC}. 
We define the procedure for AP selection {for FD CF mMIMO with limited fronthaul constraint~\eqref{maxUEs}}, %to obtain the best subset of APs to serve each uplink and downlink UE while satisfying \eqref{limfronthaul}, %wherein the $l$th uplink UE and the $k$th downlink UE are served by a subset of the entire set of APs, denoted as $\mathcal{M}^u_l$ and $\mathcal{M}^d_k$ respectively, for $\mathcal{M}^u_l, \mathcal{M}^d_k \subset \mathcal{M}$, $\forall l \in \kappa_u \text{ and } k \in \kappa_d$. 
by extending the procedure in~\cite{CellFreeMaxMinUQ} as: \newline % for FD CF mMIMO as follows \newline
    $\bullet$ The $m$th AP sorts the uplink and downlink UEs connected to it in descending order based on their channel gains ($\beta^{u}_{ml}$ and $\beta^{d}_{mk}$, respectively) and chooses the first $K_{um}$ uplink UEs and $K_{dm}$ downlink UEs %, with the largest channel gains, 
    to populate $\kappa_{um}$ and $\kappa_{dm}$, respectively. %the sets
    %the $K_{um}$ and $K_{dm}$ strongest uplink and downlink UEs.
    %, whose sets are denoted as $\mathcal{K}_{um} \subset \kappa_u$ and $\mathcal{K}_{dm} \subset \kappa_d$, respectively. 
    %The process is repeated for all APs in the set $\mathcal{M}$.
    \newline
    $\bullet$ %For the $l$th uplink UE and the $k$th downlink UE, 
   {We} populate the sets $\mathcal{M}^u_l$ and $\mathcal{M}^{d}_{k}$, respectively, using the axioms $l \in \kappa_{um} \Leftrightarrow m \in \mathcal{M}^u_l$ and $k \in \kappa_{dm} \Leftrightarrow m \in \mathcal{M}^d_k$. \newline
    %identify $\mathcal{M}^u_l$ and $\mathcal{M}^{d}_{k}$, respectively, such that for each element $m \in \mathcal{M}^u_l$ or $m' \in \mathcal{M}^d_k$ we have $l \in \kappa_{um}$ or $k \in \kappa_{dm'}$.\newline
    $\bullet$ If an uplink or downlink UE is found with no serving AP, we follow the procedure in Algorithm~\ref{algo0} to assign it the AP with the best channel conditions, while satisfying~\eqref{limfronthaul}.
    %\vspace{-0.2cm}
    \begin{algorithm}
	\footnotesize
        \DontPrintSemicolon % Some LaTeX compilers require you to use
        %\dontprintsemicolon instead
        \For{$k \gets 1$ \textbf{to} $K_d$}{
        \lIf{$\mathcal{M}^d_k = \phi$}{\linebreak
            Sort the APs in descending order of channel gains, $\beta^d_{mk}$, and find the AP $n$ with the largest channel gain. \linebreak
            For this $n$th AP, sort downlink UEs in $\kappa_{dn}$ in descending order of channel gains and find the $q$th downlink UE with minimum channel gain and \textit{at least one more connected AP}. \linebreak
            Remove the $q$th downlink UE from the set $\kappa_{dn}$ and add the $k$th downlink UE to it.
        }
        }
        Repeat the same procedure for all the uplink UEs $l = 1$ to $K_u$. %in the set $\kappa_u$.
        \caption{Fair AP selection for disconnected UEs}\label{algo0}
    \end{algorithm}
%\vspace{-0.3cm}
\section{Achievable spectral efficiency} \label{rateexpresns}
We now derive the achievable SE for the $k$th downlink UE and the $l$th uplink UE, denoted respectively as $S^{d}_{k}$ and $S^{u}_{l}$. We use $\varepsilon \!\triangleq\! \{d,u\}$ to denote downlink and uplink, respectively; $\phi \!\triangleq\! \{k,l\}$ to denote the $k$th downlink UE and the $l$th uplink UE, respectively; and $\upsilon^{\varepsilon}_{m\phi} \!\triangleq\! \{\eta_{mk} \text{ for } \phi\!=\!k, \theta_{l} \!\text{ for }\! \phi\!=\!l\}$. We employ use-and-then-forget (UatF) technique to derive SE lower bounds~\cite{CFvsSmallCells}. We rewrite the received signal at the $k$th downlink UE in~\eqref{dlsignal} and {at the CPU for the $l$th uplink UE in~\eqref{ulsignal}, as follows} 
%the sum of the DS term and an uncorrelated effective additive noise term as follows:
%\vspace{-0.2cm}
%\begin{figure*}
%\end{figure*}
\begin{small}
\begin{align}
    r^{\varepsilon}_{\phi}\!&=\! \underbrace{\Tilde{a} \!\!\!\! \sum_{m \in \mathcal{M}^{\varepsilon}_{\phi}} \!\!\!\! \sqrt{\rho_{\varepsilon}} \sqrt{\upsilon^{\varepsilon}_{m\phi}} \mathbb{E}\Big\{(\hat{\bm{g}}^{\varepsilon}_{m\phi})^{H} \bm{g}^{\varepsilon}_{m\phi}\Big\}s^{\varepsilon}_{\phi}}_{\text{desired signal, DS}^{\varepsilon}_{\phi}}+ n^{\varepsilon}_{\phi}, \!\!\!\! \label{sigeff1}
\end{align}
\end{small} 
where the first term, denoted as DS$^{\varepsilon}_{\phi}$, is the ``true" desired signal received over the channel mean. {The term $n^{\varepsilon}_{\phi}$ denotes the effective additive noise. It contains, besides the various interferences (MUI, RI, UDI), noise (N) and quantization distortion (TQD) terms as expressed in~\eqref{dlsignal}-\eqref{ulsignal}, an additional beamforming uncertainty term, denoted as BU$^{\varepsilon}_{\phi}$, which is a ``signal leakage" term received over deviation of channel from mean. It is expressed as} %, as expressed in~\eqref{ulnoiseeff1}-\eqref{dlnoiseeff1}}. The term BU$^{\varepsilon}_{\phi}$ in~\eqref{ulnoiseeff1}-\eqref{dlnoiseeff1} denotes beamforming uncertainty, a ``signal leakage" term received over deviation of channel from mean.
\begin{small}
\begin{align}
     \!\!\!\!\!\text{BU}^{\varepsilon}_{\phi}\!\!&=\!\Tilde{a} \sqrt{\rho_{\varepsilon}} \!\!\!\!\!\! \sum_{m \in \mathcal{M}^{\varepsilon}_{\phi}} \!\!\!\!\! \sqrt{\upsilon^{\varepsilon}_{m\phi}}((\bm{g}^{\varepsilon}_{m\phi})^{T}\! (\hat{\bm{g}}^{\varepsilon}_{m\phi})^{*} \!\!\! -\!\mathbb{E}\{(\bm{g}^{\varepsilon}_{m\phi})^{T} (\hat{\bm{g}}^{\varepsilon}_{m\phi})^{*}\})s^{\varepsilon}_{\phi}, \!\!\!\!\!\! \label{BU}  %\\
    %\text{BU}^{u}_{l} &= \Tilde{a} \sqrt{\rho_u} \sqrt{\theta_l} \sum_{m \in \mathcal{M}^{u}_{l}} \left((\hat{\bm{g}}^{u}_{ml})^{H} \bm{g}^{u}_{ml} - \mathbb{E}\{(\hat{\bm{g}}^{u}_{ml})^{H} \bm{g}^{u}_{ml}\}\right)s^{u}_{l}. \label{BU_ul}
\end{align}
\end{small}
It is easy to see that $n^{\varepsilon}_{\phi}$ is uncorrelated with %their respective %terms
$\text{DS}^{\varepsilon}_{\phi}$. We treat them as worst-case additive Gaussian noise, which is guaranteed to be a tight approximation for massive MIMO~\cite{FDCellFree}. {We define $\tau_f \!\!=\!\! \left(\!\frac{\tau_c - \tau_t}{\tau_c}\!\right)$, $A^d_{mk} \!\!=\!\! \Tilde{a} N_t \sqrt{\rho_d} \gamma^d_{mk}$, $B^d_{kmq} \!\!=\!\! \Tilde{b} N_t \rho_d \beta^{d}_{mk} \gamma^{d}_{mq}$, $D^d_{kl} = \rho_u \Tilde{\beta}_{kl}$, $A^u_l \!\!=\!\! \Tilde{a}^{2} N^2_r \rho_u (\!\!\!\!\sum\limits_{m \in \mathcal{M}^{u}_{l}}\!\! \gamma^{u}_{ml})^{2}\!\!$, $B^u_{lq} \!\!\!=\!\!\! \Tilde{b} N_r \rho_u \!\!\!\! \sum\limits_{m \in \mathcal{M}^{u}_{l}} \!\!\!\! \gamma^{u}_{ml}\beta^{u}_{mq}$, $D^u_{lik} \!\!\!=\!\!\! \Tilde{b}^2 N_r N_t \rho_d \gamma^{d}_{ik}\!\!\!\! \sum\limits_{m \in \mathcal{M}^{u}_{l}}\!\!\!\! \gamma^{u}_{ml} \beta_{\text{RI},mi} \gamma_{\text{RI}}$,$E^u_l\!\!=\!\!(\Tilde{b}\!\!-\!\!\Tilde{a}^2)N^2_r \rho_u \!\!\!\!\!\! \sum\limits_{m \in \mathcal{M}^u_l}\!\!\!\! (\gamma^{u}_{ml})^{2}$, and $F^u_l \!\!=\!\! \Tilde{b} N_r \!\!\sum_{m \in \mathcal{M}^{u}_{l}}\!\! \gamma^{u}_{ml}$}. Using \eqref{sigeff1}-\eqref{BU}, % and these definitions, 
we derive an achievable SE lower bound for each UE. %We state the following theorem. %to obtain an achievable SE lower bound of uplink and downlink UEs.
%\vspace{-0.1cm}
\begin{theorem}
An achievable lower bound to the SE for the $k$th downlink UE with MRT and $l$th uplink UE with MRC is%can be expressed as follows
\begin{align} 
    \!\!S^{d}_{k} \!&=\! \tau_f \log_2 \Bigg(1 \!+\! \frac{(\sum_{m \in \mathcal{M}^{d}_{k}} \!\! A^d_{mk} \sqrt{\eta_{mk}})^{2}}{\sum\limits_{m=1}^{M} \sum\limits_{q \in \kappa_{dm}}\!\!\!\!\! B^d_{kmq} \eta_{mq} \!+\! \sum\limits_{l=1}^{K_u} \!\! D^d_{kl} \theta_l \!+\! 1}\Bigg), \label{dlrate} \\
    \!\!S^{u}_{l} \!&=\! \tau_f \! \log_2 \! \Bigg(\!1 \!+\! \frac{A^u_l \theta_l}{\sum\limits_{q=1}^{K_u} \!\! B^u_{lq} \theta_q \!\!+\!\! \sum\limits_{i=1}^{M} \! \sum\limits_{k \in \kappa_{di}}\!\!\!\!\! D^u_{lik} \eta_{ik} \!\!+\!\! E^u_l \theta_l \!\!+\!\! F^u_l}\!\Bigg). \label{ulrate} 
\end{align} %\colr{as in~\eqref{dlulrate},} 
 {Here $\bm{\eta} \!\! \triangleq \!\! \{\eta_{mk}\} \!\!\in\!\! \mathbb{C}^{M \times K_d}$, $\bm{\Theta} \!\!\triangleq\!\! \{ \theta_l\}  \!\!\in\!\! \mathbb{C}^{K_u \times 1}$ and $\bm{\nu} \!\! \triangleq\!\! \{\nu_m\} \!\! \in \!\! \mathbb{C}^{M \times 1}$ are the variables on which the SE for each UE depends.} %The AP employs MRC in the uplink, and  MRT in the downlink, and optimal uniform quantization in the fronthaul. %and  %$\Tilde{a}$ and $\Tilde{b}$ in \eqref{dlrate}-\eqref{ulrate} depend on $\bm{\nu}$~\cite{CellFreeMaxMinUQ}.} %We recall from Section \ref{model} that $\Tilde{a}$ and $\Tilde{b}$ in \eqref{dlulrate} depend on the number of quantization bits, $\bm{\nu}$~\cite{CellFreeMaxMinUQ}.
\end{theorem}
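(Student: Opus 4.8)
The plan is to apply the use-and-then-forget (UatF) bounding technique. Its starting point is the elementary fact that if a received sample can be written as $r=\mathrm{DS}\cdot s+n$ with $s\sim\mathcal{CN}(0,1)$ the intended symbol and $n$ a residual satisfying $\mathbb{E}\{n s^{*}\}=0$, then, treating $n$ as independent worst-case Gaussian noise, the rate $\tau_f\log_2\!\big(1+|\mathrm{DS}|^{2}/\mathbb{E}\{|n|^{2}\}\big)$ is achievable, and for massive MIMO this is a tight lower bound. Hence, using the decomposition already set up in~\eqref{sigeff1}--\eqref{BU} (with $n^{\varepsilon}_{\phi}$ collecting the BU, MUI, UDI/RI, AWGN and TQD terms of~\eqref{dlsignal}--\eqref{ulsignal}), the proof reduces to two computations: evaluating $|\mathrm{DS}^{\varepsilon}_{\phi}|^{2}$, and evaluating $\mathbb{E}\{|n^{\varepsilon}_{\phi}|^{2}\}$.

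First I would evaluate the desired-signal power. Since $\bm{g}^{\varepsilon}_{m\phi}=\hat{\bm{g}}^{\varepsilon}_{m\phi}+\bm{e}^{\varepsilon}_{m\phi}$ with the MMSE error $\bm{e}^{\varepsilon}_{m\phi}$ independent of and zero-mean relative to $\hat{\bm{g}}^{\varepsilon}_{m\phi}$, we have $\mathbb{E}\{(\hat{\bm{g}}^{u}_{ml})^{H}\bm{g}^{u}_{ml}\}=\mathbb{E}\{\norm{\hat{\bm{g}}^{u}_{ml}}^{2}\}=N_{r}\gamma^{u}_{ml}$ and likewise $\mathbb{E}\{(\bm{g}^{d}_{mk})^{T}(\hat{\bm{g}}^{d}_{mk})^{*}\}=N_{t}\gamma^{d}_{mk}$. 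Substituting into $\mathrm{DS}^{\varepsilon}_{\phi}$ and squaring yields the numerators $\big(\sum_{m\in\mathcal{M}^{d}_{k}}A^{d}_{mk}\sqrt{\eta_{mk}}\big)^{2}$ and $A^{u}_{l}\theta_l$ with the stated $A^{d}_{mk}$ and $A^{u}_{l}$.

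Next I would argue that the constituents of $n^{\varepsilon}_{\phi}$ are mutually uncorrelated, so that $\mathbb{E}\{|n^{\varepsilon}_{\phi}|^{2}\}$ is the sum of their individual second moments. This rests on: independence of the small-scale fading, the RI channels $\bm{H}_{mi}$ and the noises across distinct APs; intra-set pilot orthonormality, which makes $\hat{\bm{g}}^{d}_{mk}$ and $\hat{\bm{g}}^{d}_{mq}$ (for $q\neq k$) independent, and similarly on the uplink; independence of the quantization distortions from the channels together with their Bussgang-uncorrelatedness with the quantizer inputs; and the fact that the zero-mean BU term is uncorrelated with the rest. Then each second moment is a routine Gaussian-moment computation: for the ``cross'' terms (MUI, UDI) one uses $\mathbb{E}\{|\bm{x}^{H}\bm{y}|^{2}\}=N\sigma_{x}^{2}\sigma_{y}^{2}$ for independent circularly-symmetric vectors; for the self-leakage BU one uses $\mathrm{Var}\big((\hat{\bm{g}}^{\varepsilon}_{m\phi})^{H}\bm{g}^{\varepsilon}_{m\phi}\big)=N\gamma^{\varepsilon}_{m\phi}\beta^{\varepsilon}_{m\phi}$, which follows from $\mathbb{E}\{\norm{\hat{\bm{g}}}^{4}\}=N(N+1)\gamma^{2}$; for the RI term one uses $\mathbb{E}\{|(\hat{\bm{g}}^{u}_{ml})^{H}\bm{H}_{mi}(\hat{\bm{g}}^{d}_{ik})^{*}|^{2}\}=\gamma_{\mathrm{RI},mi}N_{r}\gamma^{u}_{ml}N_{t}\gamma^{d}_{ik}$ together with $\mathbb{E}\{|\Tilde{a}\sqrt{\eta_{ik}}s^{d}_{k}+\varsigma^{d}_{ik}|^{2}\}=\Tilde{b}\eta_{ik}$; and for the uplink TQD one first computes the post-MRC power $\mathbb{E}\{|(\hat{\bm{g}}^{u}_{ml})^{H}\bm{y}^{u}_{m}|^{2}\}$ by the same term-by-term accounting, then multiplies by $(\Tilde{b}-\Tilde{a}^{2})$ and sums over $m\in\mathcal{M}^{u}_{l}$. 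Collecting like powers and using $\Tilde{a}^{2}+(\Tilde{b}-\Tilde{a}^{2})=\Tilde{b}$ merges the through-beamformer signal/interference contributions with the corresponding quantization-distortion contributions into the coefficients $B^{d}_{kmq}$, $B^{u}_{lq}$, $D^{d}_{kl}$, $D^{u}_{lik}$; the leftover fourth-moment and noise pieces on the uplink give $E^{u}_{l}\theta_l$ and $F^{u}_{l}$, while the unit-variance downlink AWGN $w^{d}_{k}$ supplies the ``$+1$'' in~\eqref{dlrate}. Dividing numerator by denominator and wrapping in $\tau_f\log_2(1+\cdot)$ yields~\eqref{dlrate}--\eqref{ulrate}.

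The main obstacle is the bookkeeping rather than any single hard estimate. The two delicate points are: (i) on the uplink the distortion variance $\mathbb{E}\{(\varsigma^{u}_{ml})^{2}\}$ is itself proportional to the MRC-output power $\mathbb{E}\{|(\hat{\bm{g}}^{u}_{ml})^{H}\bm{y}^{u}_{m}|^{2}\}$, so the MUI/RI/noise powers must be derived once for the direct path and re-used inside the TQD term, with the $\Tilde{a}^{2}+(\Tilde{b}-\Tilde{a}^{2})=\Tilde{b}$ cancellations tracked exactly so that the $B$, $D$, $E$, $F$ coefficients come out in closed form; and (ii) establishing pairwise uncorrelatedness of every pair of terms in $n^{\varepsilon}_{\phi}$, which is precisely where intra-set pilot orthonormality, per-AP independence of $\bm{H}_{mi}$, and the Bussgang hypothesis on the distortions are indispensable and where an independence or cross-term slip is easiest to make.
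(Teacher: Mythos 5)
Your proposal is correct and follows essentially the same route as the paper: the UatF decomposition of~\eqref{sigeff1}--\eqref{BU} with the effective noise treated as worst-case uncorrelated Gaussian, the desired-signal power from $\mathbb{E}\{(\hat{\bm{g}}^{\varepsilon}_{m\phi})^{H}\bm{g}^{\varepsilon}_{m\phi}\}=N_{\xi}\gamma^{\varepsilon}_{m\phi}$, the BU variance from $\mathbb{E}\{\|\hat{\bm{g}}\|^{4}\}=N(N+1)\gamma^{2}$, and term-by-term second moments of MUI/UDI/RI/N/TQD (including the Bussgang treatment of the uplink distortion via the MRC-output power) summed into the stated $A,B,D,E,F$ coefficients. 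The paper's appendix only writes out the DS and BU computations explicitly and declares the rest analogous, so your accounting of the remaining terms and of the $\Tilde{a}^{2}+(\Tilde{b}-\Tilde{a}^{2})=\Tilde{b}$ merging fills in exactly what the paper omits for space.
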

%\begin{figure*}
%\hrulefill
%\vspace{-0.8cm}
%\end{figure*}
%\vspace{-0.2cm}
\begin{proof}
%Refer to Appendix \ref{SINRterms}. This proof simplifies the uplink downlink interference, $\text{UDI}^d_k$, and the residual intra- and inter-AP interference, $\text{RI}^u_l$, terms. 
{Refer to Appendix~\ref{SINRterms}.}
%We see that the SE expressions are functions of large scale fading coefficients, $\gamma^d_{mk}$ and $\gamma^u_{ml}$. We will use them to  optimize WSEE. \textit{This is unlike \cite{FDCellFree2} which requires instantaneous channel while optimizing SE-GEE metric.} %We numerically investigate their effects later in Section~\ref{simresults}.
\end{proof}
%\vspace{-0.1cm}
%The average per-UE SE is therefore: 
%\vspace{-0.1cm}
%\begin{equation}
%    S_{\text{FD}}(\bm{\eta, \Theta,\nu}) = \frac{1}{(K_u + K_d)} \Big(\sum_{k \in \kappa_d} S^{d}_{k} (\bm{\eta, \Theta,\nu}) + \sum_{l \in \kappa_u} S^{u}_{l} (\bm{\eta, \Theta,\nu}) \Big). \label{SEFD}
%\end{equation}
%To achieve this aim we first define using \eqref{SEFD} and \eqref{fixpower}, the individual EE for the $k$th downlink UE and the $l$th uplink UE respectively,  as $    \text{EE}^d_k = B \frac{S^d_k}{p^d_k}  \text{ and }\text{EE}^u_l = B \frac{S^u_l}{p^u_l}$. 
\vspace{-0.2cm}
\section{WSEE Maximization for FD CF mMIMO} \label{ADMM}
\vspace{-0.2cm}
We now maximize WSEE by calculating the optimal {power control coefficients $\{\bm{\eta}^{*}, \bm{\Theta}^{*}\}$.} %We first describe the power consumption model.} 
We denote $\varepsilon\!\triangleq\! \{d,u\}$ for the downlink and uplink, respectively; $\phi \!\triangleq\! \{k,l\}$ for the $k$th downlink UE and $l$th uplink UE, respectively; and define the individual EEs for each UE as $\text{EE}^{\varepsilon}_{\phi} \!\!=\!\! \frac{B \cdot S^{\varepsilon}_{\phi}}{p^{\varepsilon}_{\phi}}$,~\cite{WSEEEfrem}, with $B$ being the bandwidth and $p^{\varepsilon}_{\phi}$ denoting the individual power consumption for each UE. The power consumed by the $k$th downlink UE and the $l$th uplink UE are obtained respectively as~\cite{CellFreeEEUQ}
%\vspace{-0.1cm}
\begin{align}
p^d_k &= P_{\text{fix}} + N_t \rho_d N_0 \sum_{m \in \mathcal{M}^{d}_{k}} \frac{1}{\alpha_m}\gamma^{d}_{mk} \eta_{mk}  + P^{d}_{\text{tc},k}, \label{dlpower} \\
%\end{align}
%\begin{align}
p^u_l &= P_{\text{fix}} + \rho_u N_0 \frac{1}{\alpha'_l} \theta_l + P^{u}_{\text{tc},l}. \label{ulpower} 
\end{align}
%\vspace{-0.1cm}
Here $\alpha_m, \alpha'_l$ are power amplifier efficiencies at the $m$th AP and the $l$th uplink UE respectively~\cite{FDCellFree}, $N_0$ is noise power. The $P_{\text{fix}}$ is the {fixed per-UE  power consumed} by the APs i.e., 
\begin{align}
    P_{\text{fix}} \!&= \! \frac{1}{K} \sum_{m=1}^{M} \! \left( P_{0,m} \!+\! (N_t\!+\!N_r) P_{\text{tc},m} \!+\! P_{\text{ft}}\frac{R_{\text{fh},m}}{C_{\text{fh},m}} \right).\!\! \label{fixpower}
\end{align} 
Here %$\alpha_m, \alpha'_l$: power amplifier efficiencies at the $m$th AP and the $l$th uplink UE respectively~\cite{FDCellFree}, $N_0$: noise power and
$P_{\text{tc},m}$, $P^{d}_{\text{tc},k}, P^{u}_{\text{tc},l}$ are transceiver chain power consumption at each antenna of the $m$th AP, $k$th downlink UE and $l$th uplink UE, respectively. The fronthaul power consumed  by the $m$th AP has a fixed component $P_{0,m}$, and a traffic-dependent component, {proportional to the data rate $R_{\text{fh},m}$, attaining} a maximum of $P_{\text{ft}}$ at capacity $C_{\text{fh},m}$. %The term $R_{\text{fh},m}$, given in~\eqref{fhrate}, is the  fronthaul data rate of the $m$th AP. 

The WSEE is defined as weighted sum of individual EEs~\cite{fractionalprogrammingbook} % 
\begin{align}
    \text{WSEE} \!\! &= \!\! \sum_{k=1}^{K_d} \! w^d_k \text{EE}^d_k \!+\! \sum_{l=1}^{K_u} \! w^u_l \text{EE}^u_l \!\stackrel{\bigtriangleup}{=}\!B\! \left(\sum_{k=1}^{K_d}\! w^d_k \frac{S^{d}_{k}}{p^d_k} \!+\! \sum_{l=1}^{K_u}\! w^u_l \frac{S^{u}_{l}}{p^u_l}\right)\!\!, \!\!\!\! \nonumber %\label{WSEE_def} % 
\end{align}
where $w^{\varepsilon}_{\phi}$
%\colb{$w^{d}_{k}$ and $w^{u}_{l}$} 
are weights {assigned to downlink and uplink UEs to account for heterogeneous EE requirements of UEs~\cite{WSEEEfrem}}. %The WSEE metric can prioritize the EE requirements of individual UEs by assigning them different weights~\cite{WSEEEfrem}. For example, it could assign a higher weight to a UE that is more energy-scarce. 

The WSEE maximization problem is formulated as %can now be formulated as follows 
%We now calculate $\bm{\eta}$, $\bm{\Theta}$ and $\bm{\nu}$, to optimize the WSEE as follows:
 %Pareto boundary
%\vspace{-0.1cm}
\begin{subequations} 
\begin{alignat}{2}
\notag \!\!\!\! \textbf{P1}: & \underset{\substack{\bm{\eta \text{, } \Theta \text{, } \nu}}}{\mbox{max}} && B \left(\sum_{k=1}^{K_d} w^d_k \frac{S^{d}_{k}(\bm{\eta, \Theta,\nu})}{p^d_k(\bm{\eta,\nu})} + \sum_{l=1}^{K_u} w^u_l \frac{S^{u}_{l}(\bm{\eta, \Theta, \nu})}{p^u_l(\bm{\Theta,\nu})} \right) \\ %[-10pt]
& \;\; \text{ s.t. } \; && S^d_k (\bm{\eta, \Theta,\nu}) \geq S^d_{ok} \text{, } S^u_l (\bm{\eta, \Theta,\nu}) \geq S^u_{ol}, \label{P1const1} \\ %[-2pt]
&&&  R_{\text{fh},m} \leq C_{\text{fh},m}, \eqref{cons1}, {\eqref{cons2}}. \label{P1const3}  %\, \forall m \in \mathcal{M} \label{P1const3} 
\end{alignat}
\end{subequations}
%\vspace{-0.1cm}
The quality-of-service (QoS) constraints in \eqref{P1const1} guarantee a minimum SE, %denoted by the constants 
$S^d_{ok}$ and $S^u_{ol}$, for each downlink and uplink UE respectively. 
%The first constraint in $\eqref{P1const3}$ ensures that the fronthaul transmission rate for all APs is within the capacity limit. %using  ADMM algorithm, which follows a decomposition-coordination approach to maximize the  global WSEE objective by decomposing it into a number of sub-problems.  Each sub-problem is solved independently and the local solutions are then coordinated to calculate the global solution \cite{ADMM,ADMM2,ADMMEkant}. %alternating execution and sharing information between a centralized server (C-server) and distributed servers (D-servers). 
We observe that the number of quantization bits, $\bm{\nu}$, {makes \textbf{P1} a} difficult integer optimization problem~\cite{CellFreeEEUQ}. {We therefore fix $\bm{\nu}$ such that it satisfies the first constraint in~\eqref{P1const3}~\cite{CellFreeEEUQ} %, and solve it to optimize the power control coefficients $\{\bm{\eta}, \bm{\Theta}\}$.} We 
and numerically investigate {$\bm{\nu}$} in Section \ref{simresults}.}
\iffalse
We omit the constant $B$ and reformulate \textbf{P1} as
%from objective as
%\vspace{-0.1cm}
%\begin{subequations}
\begin{alignat}{2}
\notag \!\!\!\! \textbf{P2}: & \underset{\substack{\bm{\eta\text{, }\Theta}}}{\text{ }\mbox{max}} && \sum_{k=1}^{K_d} w^d_k \frac{S^{d}_{k}(\bm{\eta, \Theta})}{p^d_k(\bm{\eta})} + \sum_{l=1}^{K_u} w^u_l \frac{S^{u}_{l}(\bm{\eta, \Theta})}{p^u_l(\bm{\Theta})} \\
& \;\; \text{ s.t. } \; && S^d_k(\bm{\eta, \Theta}) \geq S^d_{ok} \text{, } S^u_l(\bm{\eta, \Theta}) \geq S^u_{ol}, \label{P2const1}\\
&&& \eqref{cons1}, \eqref{cons2}. \nonumber %\label{P2const2}
\end{alignat}
\fi
%The objective in \colb{\textbf{P1}} is a sum of %ratios, each of which is a
%\colb{pseudo-concave (PC)} functions of $\{\bm{\eta}, \bm{\Theta}\}$. %(for a fixed number of quantization bits, $\bm{\nu}$, \colb{which is numerically investigated later in Section \ref{simresults}}). 
%It is, therefore, not guaranteed to be a PC function and Dinkelbach's algorithm %, \colb{commonly used to maximize the GEE}, 
%cannot be applied to maximize it~\cite{fractionalprogrammingbook}. %This makes it a much harder objective to optimize as opposed to the more commonly studied GEE metric, which is a PC function~\cite{fractionalprogrammingbook}. 
{We now linearize the non-convex objective in {\textbf{P1}, after omitting the constant $B$,} using epigraph transformation~\cite{boyd2004convex}}:
%\begin{subequations}
%\vspace{-0.2cm}
\begin{alignat}{2}
\notag \!\!\!\! {\textbf{P2}}:& \underset{\substack{\bm{\eta\text{, }\Theta\text{, }f^{d}\text{, }f^{u}}}}{\mbox{max}} && \sum_{k=1}^{K_d} w^d_k f^d_k + \sum_{l=1}^{K_u} w^u_l f^u_l  \\
& \;\; \text{ s.t. } \; &&  f^d_k \leq \frac{S^{d}_{k}(\bm{\eta, \Theta})}{p^d_k(\bm{\eta})} \text{, } f^u_l \leq \frac{S^{u}_{l}(\bm{\eta, \Theta})}{p^u_l(\bm{\Theta})}, \label{P3const1} \\ %, \eqref{P2const1}, \eqref{cons1}, \eqref{cons2}.\label{P3const1} 
%&&& S^d_k(\bm{\eta, \Theta}) \geq S^d_{ok} \text{, } S^u_l(\bm{\eta, \Theta}) \geq S^u_{ol}, \label{P3const2}\\
&&& \eqref{cons1}, \eqref{cons2}, {\eqref{P1const1}}. \nonumber %\label{P3const2}
\end{alignat}
%\vspace{-0.2cm}
%\end{subequations}
%\colb{The bandwidth, $B$, which is a constant, has been dropped from the objective.} 
Here, $\bm{f}^{\varepsilon} \!\! \triangleq \!\! [f^{\varepsilon}_1 \dots f^{\varepsilon}_{K_{\varepsilon}}] \! \in \! \mathbb{C}^{K_{\varepsilon} \times 1}$ are slack variables representing the individual EEs, $\text{EE}^{\varepsilon}_{\phi}$, divided by bandwidth $B$.
{We simplify non-convex constraints~\eqref{P1const1} and \eqref{P3const1} as follows: \newline
%$\bullet$ We cross-multiply the terms $p^d_k, p^u_l$ and $f^d_k, f^u_l$ in \eqref{P3const1}. \newline
$\bullet$ We introduce slack variables $\bm{\Psi}^{\varepsilon} \!\triangleq\! [\Psi^{\varepsilon}_1, \dots, \Psi^{\varepsilon}_{K_{\varepsilon}}], \bm{\zeta}^{\varepsilon} \!\triangleq\! [\zeta^{\varepsilon}_1, \dots, \zeta^{\varepsilon}_{K_{\varepsilon}}], \bm{\lambda}^{\varepsilon} \!\triangleq\! [\lambda^{\varepsilon}_1, \dots, \lambda^{\varepsilon}_{K_{\varepsilon}}] \! \in\! \mathbb{C}^{K_{\varepsilon} \times 1}$. representing square roots of individual SEs, $S^{\varepsilon}_{\phi}$; individual signal-to-interference-and-noise-ratios (SINRs), and square roots of numerators of individual SINRs, respectively. We substitute $\bm{C} \!\triangleq\! \{c_{mk}\!\triangleq\sqrt{\eta_{mk}}\} \!\in\! \mathbb{C}^{M \times K_d}$, for concave variable $\sqrt{\eta_{mk}}$. \newline %s arising due to
$\bullet$  We now use \eqref{dlrate}-\eqref{ulrate} and~\eqref{dlpower}-\eqref{ulpower}, respectively, to expand the individual SEs, $S^{\varepsilon}_{\phi}$, and the power consumptions, $p^{\varepsilon}_{\phi}$, in terms of the optimization variables $\bm{\{C,\Theta,f^{\varepsilon}_{\phi}, \Psi^{\varepsilon}_{\phi}, \zeta^{\varepsilon}_{\varphi}, \lambda^{\varepsilon}_{\varphi}\}}$. \newline %and recast \colb{\textbf{P2}} into a simpler problem
$\bullet$ To linearise the right-hand side of jointly convex constraints, %we use the property that a first-order Taylor approximation is a global under-estimator of a convex function~\cite{boyd2004convex}. 
at the $n$th SCA iteration, we substitute first-order Taylor approximates, $\frac{f^{2}_{1}}{f_2} \! \geq \! 2 \frac{f^{(n)}_{1}}{f^{(n)}_2} f_1 \!-\! \frac{(f^{(n)}_1)^{2}}{(f^{(n)}_2)^{2}}f_2 \!\triangleq\! \Lambda^{(n)}\! \left(\frac{f^{2}_{1}}{f_2}\right)$, as global under-estimators~\cite{boyd2004convex}, to recast {\textbf{P2}} into a GCP as follows}
\begin{subequations}
\begin{align}
\notag {\textbf{P3}}:& \!\!\!\!\!\!\!\! \underset{\substack{\bm{C\text{, }\Theta\text{, }f^d\text{, }f^u}\\ \bm{\Psi^d\text{, }\Psi^u\text{, }\zeta^d\text{, }}\\\bm{\zeta^u\text{, }\lambda^d\text{, }\lambda^u}}}{\mbox{max}} && \!\!\!\!\!\!\! \sum_{k=1}^{K_d} w^d_k f^d_k  + \sum_{l=1}^{K_u} w^u_l f^u_l \\
%& \text{ s.t. }  \notag \Big(\frac{\Tilde{b} - \Tilde{a}^2}{\Tilde{a}^2}\Big) \rho_u N_r \sum_{m \in \mathcal{M}^{u}_{l}} (\gamma^u_{ml})^2 + \frac{\Tilde{b}}{\Tilde{a}^2} \sum_{m \in \mathcal{M}^{u}_{l}}  \gamma^u_{ml} +  \frac{\Tilde{b}}{\Tilde{a}^2}\rho_u \sum_{m \in \mathcal{M}^{u}_{l}} \sum_{q \in \kappa_{um}} \gamma^u_{ml} \beta^u_{mq} \theta_q \\
%& \quad + \frac{\Tilde{b}}{\Tilde{a}^2}N_t \rho_d \sum_{m \in \mathcal{M}^{u}_{l}} \sum_{i \in \mathcal{M}} \sum_{k \in \kappa_{di}} \gamma^u_{ml} \beta_{\text{RI},mi} \gamma_{\text{RI}} \gamma^d_{ik} c^2_{ik} \leq 2 \frac{\lambda'^{(n)}_{l}}{\zeta'^{(n)}_l} \lambda'_l - \frac{(\lambda'^{(n)}_l)^{2}}{(\zeta'^{(n)}_l)^{2}}\zeta'_l, \label{P6const1} \\
& \;\; \text{ s.t. } \; && \!\!\!\!\!\!\! \tau_f\log_2(1 \!+\! \zeta^d_k) \!\geq\! S^d_{ok} \text{, } \tau_f\log_2(1\!+\!\zeta^u_l) \! \geq \! S^u_{ol},\!\!\!\!\! \label{P4const5} \\
&&& \!\!\!\!\!\!\! (\Psi^d_{k})^{2} \!\leq\! \tau_f \log_2(1\!+\! \zeta^d_k) \text{, } (\Psi^u_{l})^{2} \!\leq\! \tau_f \log_2(1\!+\!\zeta^u_l), \!\! \label{P4const2} \\
&&& \!\!\!\!\!\!\! \lambda^d_k \leq \sum_{m \in \mathcal{M}^{d}_{k}} A^d_{mk} c_{mk} \text{, } (\lambda^{u}_l)^2 \leq A^u_l \theta_l, \label{P5const3} \\
&&&\!\!\!\!\!\!\!\!\!\!\! \sum_{q=1}^{K_u} \!\! B^u_{lq} \theta_q \!\!+\!\! \sum_{i=1}^{M} \!\! \sum_{k \in \kappa_{di}} \!\!\!\! D^u_{lik} c^2_{ik} \!\!+\!\! E^u_l \theta_l \!\!+\!\! F^u_l \!\!\leq\!\! \Lambda^{(n)}\!\! \left(\!\frac{(\lambda^{u}_{l})^{2}}{\zeta^u_l}\!\!\right)\!\!, \label{P6const1} \\
&&& \!\!\!\!\!\!\! \sum_{m=1}^{M}\! \sum_{q \in \kappa_{dm}} \!\!\!\! B^d_{kmq} c^2_{mq} \!\!+\!\! \sum_{l=1}^{K_u} \!\! D^d_{kl} \theta_l \!+\! 1 \!\leq\!\! \Lambda^{(n)}\! \left(\frac{(\lambda^{d}_{k})^{2}}{\zeta^d_k}\right)\!\!, \label{P6const2} \\
%& 1 + \frac{\Tilde{b}}{\Tilde{a}^2} \rho_d N_t \sum_{m \in \mathcal{M}} \sum_{q \in \kappa_{dm}} \beta^d_{mk} c^2_{mq} \gamma^d_{mq} + \frac{\rho_u}{\Tilde{a}^2} \sum_{l \in \kappa_u} \Tilde{\beta}_{kl} \theta_l \leq 2 \frac{\lambda^{(n)}_{k}}{\zeta^{(n)}_k} \lambda_k - \frac{(\lambda^{(n)}_k)^{2}}{(\zeta^{(n)}_k)^{2}}\zeta_k, \label{P6const2} \\
%& \sum_{m \in \mathcal{M}} \sum_{q \in \kappa_{dm}} B^d_{kmq} c^2_{mq} + \sum_{l \in \kappa_u} D^d_{kl} \theta_l + 1 \leq \Lambda \left(\frac{\lambda^{2}_{k}}{\zeta_k}\right), \label{P6const2} \\
&&&\!\!\!\!\!\!\!\!\! P_{\text{fix}} \!\!\!+\! N_t\rho_d N_0 \!\!\!\! \sum_{m \in \mathcal{M}^{d}_{k}} \!\!\!\! \frac{\gamma^{d}_{mk} c^{2}_{mk}}{\alpha_m}\!+\! P^{d}_{\text{tc},k} \!\leq\! \Lambda^{(n)}\! \left(\frac{(\Psi^{d}_{k})^{2}}{f^d_k}\right)\!\!,  \label{P6const3} \\
&&&\!\!\!\!\!\!\! P_{\text{fix}} \!+\! \rho_u N_0 \frac{\theta_l}{\alpha'_l} \!+\! P^{u}_{\text{tc},l}  \leq \Lambda^{(n)}  \left(\frac{(\Psi^{u}_{l})^{2}}{f^u_l}\right)\!\!, \!\!\!\! \label{P6const4} \\
%&  P_{\text{fix}} (\Bar{\bm{\nu}}) \!\! + \!\! \rho_u N_0 \frac{1}{\alpha'_l} \theta_l \!\! + \!\! P^{u}_{\text{tc},l} \!\! \leq \!\! \Lambda \left(\frac{\Psi^{2}_{k}}{\colb{f^d_k}}\right), \label{P6const4} \\
&&& \!\!\!\!\!\!\! \lambda^d_k \geq 0, \Tilde{b} \!\! \sum_{k \in \kappa_{dm}} \!\!\!\! \gamma^{d}_{mk} c^2_{mk} \leq \frac{1}{N_t} \text{, } c_{mk} \geq 0, \eqref{cons2}. \!\!\! \label{P5const4}
\end{align}
\end{subequations}
%\vspace{-0.2cm}
%\textbf{Calculation of initial feasible point}: The power control variables are initialized for SCA by allocating equal power to all downlink UEs being served and full power to all uplink UEs. The corresponding equations for the $1$st iteration of SCA are: 
%\begin{align}
%    c^{(1)}_{mk} &= 
%    \begin{cases}
%        \sqrt{\frac{1}{\Tilde{b}N_t\sum_{k\in\kappa_{dm}}\gamma^{d}_{mk}}},\, \forall m \in \mathcal{M}, k \in \kappa_{dm}, \\
%        0, \, \forall m \in \mathcal{M}, k \in \kappa_d \setminus \kappa_{dm},
%    \end{cases} \label{initcSCA}\\
%    \theta^{(1)}_l &= 1, \,\forall l \in \kappa_u. \label{initthetaSCA}
%\end{align}
%With the power control variables initialized as in \eqref{initcSCA}-\eqref{initthetaSCA}, the other optimization variables $\{\bm{g,g',\Psi,\Psi',\zeta,\zeta',\lambda,\lambda'}\}$ are initialized as $\{\bm{g,g',\Psi,\Psi',\zeta,\zeta',\lambda,\lambda'}\}^{(1)}$ by putting $n=1$ and replacing the inequalities \eqref{P5const2}, \eqref{P6const1}, \eqref{P4const2} and \eqref{P6const2} by equality, in turn. 
The residue after the $n$th iteration, $\bm{r}_{\text{SCA}}^{(n)}$, has a magnitude
\[\|\bm{r}_{\text{SCA}}^{(n)}\| = \sqrt{\|\bm{C}^{(n+1)} - \bm{C}^{(n)}\|^{2}_{F} + \|\bm{\Theta}^{(n+1)} - \bm{\Theta}^{(n)}\|^{2}}.\]  %The SCA procedure converges when $\|\bm{r}_{\text{SCA}}^{(n)}\| \leq \epsilon_{\text{SCA}}$. 
{We provide an iterative method to solve {\textbf{P3}} in Algorithm~\ref{algo1}.}
\vspace{-0.2cm}
\begin{algorithm} %[!htp]
	\footnotesize
\DontPrintSemicolon % Some LaTeX compilers require you to use
%\dontprintsemicolon instead
\KwIn{i) Initialize power control coefficients $\{\bm{C,\Theta}\}^{(1)}$ by allocating equal power to all downlink UEs being served and full power to all uplink UEs.  Set $n = 1$.\\
ii) Initialize $\{\bm{f^d,f^u,\Psi^d,\Psi^u,\zeta^d,\zeta^u,\lambda^d,\lambda^u}\}^{(1)}$ by replacing \eqref{P5const3}, \eqref{P6const1}-\eqref{P6const2}, \eqref{P4const2} and \eqref{P6const3}-\eqref{P6const4} by equality.}
\KwOut{Globally optimal power control coefficients $\{\bm{C,\Theta}\}^{*}$}
\While{$\|\bm{r}_{\text{SCA}}^{(n)}\| \leq \epsilon_{\text{SCA}}$ {\text{(convergence threshold)}}}{
%Given a feasible $p_{k}$ compute $\lambda^{(n)}$ and $z^{(n)}$ .\\
Solve {\textbf{P3}} for the $n$th SCA iteration to obtain optimal variables, $\{\bm{f^d, f^u, \Psi^d, \Psi^u, \zeta^d, \zeta^u, \lambda^d, \lambda^u, C, \Theta}\}^{*,(n)}$. \\
Assign the SCA iterates for the $(n+1)$th iteration, $\{\bm{f^d, f^u, \Psi^d, \Psi^u, \zeta^d, \zeta^u, \lambda^d, \lambda^u, C, \Theta}\}^{(n+1)} = \{\bm{f^d, f^u, \Psi^d, \Psi^u, \zeta^d, \zeta^u, \lambda^d, \lambda^u, C, \Theta}\}^{*,(n)}$. 
}
%\Return{$\{\bm{C,\Theta}\}^{*}$.}
%\caption{Centralized WSEE maximization algorithm}\label{algo1}
\caption{WSEE maximization algorithm}\label{algo1}
%\vspace{-0.4cm}
\end{algorithm}
\vspace{-0.4cm}
\section{Simulation results} \label{simresults}
\vspace{-0.2cm}
We now numerically investigate the performance of a FD CF mMIMO system with limited capacity fronthaul links using the proposed %two-layer decentralized
optimization approach. We assume a realistic system model wherein $M$ APs, $K_d$ downlink UEs and $K_u$ uplink UEs are scattered in a square of size $D$ km $\times$ $D$ km. To avoid the boundary effects~\cite{CFvsSmallCells}, we wrap the APs and UEs around the edges~\cite{FDCellFree}.  %We denote $\varepsilon \triangleq \{d,u\}$ to denote downlink and uplink, respectively; $\phi \triangleq \{k,l\}$ to denote $k$th downlink UE and $l$th uplink UE, respectively. 
The large-scale fading coefficients {for the channels between the APs and the uplink and downlink UEs, denoted as $\beta^{d}_{mk}$ and $\beta^{u}_{ml}$ respectively,} are modeled as~\cite{CFvsSmallCells}
%\vspace{-0.1cm}
\begin{align}
\beta^{d}_{mk} &= 10^{\frac{\text{PL}^{d}_{mk}}{10}} 10^{\frac{\sigma_{\text{sd}} z^{d}_{mk}}{10}} \text{ and } \beta^{u}_{ml} &= 10^{\frac{\text{PL}^{u}_{ml}}{10}} 10^{\frac{\sigma_{\text{sd}} z^{u}_{ml}}{10}}. \label{pathloss}
\end{align}
%\vspace{-0.2cm} 
Here {$10^{\frac{\sigma_{\text{sd}} z^{d}_{mk}}{10}}$ and $10^{\frac{\sigma_{\text{sd}} z^{u}_{ml}}{10}}$ are log-normal shadowing factors} having a standard deviation $\sigma_{\text{sd}}$ (in dB) and {$z^{d}_{mk}$ and $z^{u}_{ml}$ follow} a two-components correlated model~\cite{CFvsSmallCells}. The path {losses, $\text{PL}^{d}_{mk}$ and $\text{PL}^{u}_{ml}$} (in dB) follows a three-slope model~\cite{CFvsSmallCells, FDCellFree}. 
We, similar to~\cite{FDCellFree}, model the large-scale fading coefficients for the inter-AP RI channels, i.e., $\beta_{\text{RI}, mi}, \, \forall i \neq m$, as in \eqref{pathloss}, and assume that the large-scale fading for the intra-AP RI channels, which experience no shadowing, are modeled as $\beta_{\text{RI},mm} = 10^{\frac{\text{PL}_{\text{RI}} \text{(dB)}}{10}}$. The inter-UE large scale fading coefficients, $\Tilde{\beta}_{kl}$, are also modeled similar to \eqref{pathloss}. We consider, for brevity, the same number of quantization bits $\nu$, and the same capacity $C_{\text{fh}}$, on all fronthaul links. We, henceforth, denote the transmit powers on the downlink and uplink as $p_{d}$ $(= \rho_{d} N_0)$ and $p_{u}$ $(= \rho_{u} N_0)$ respectively, and the pilot transmit power as $p_t (= \rho_t N_0)$.  We fix the system model and power consumption model parameters, similar to~\cite{CFvsSmallCells,FDCellFree,CellFreeEEUQ}, as in Table \ref{sysparams}.
%The values for the system parameters used for simulations, unless mentioned otherwise, are given in Table \ref{sysparams}.
%\vspace{-0.1cm}
\begin{table}[hbt!]
%\vspace{-0.3cm}
\small
\centering
 \begin{tabular}{|c|c|} 
 \hline
 Parameter & Value \\ %[0.5ex] 
 \hline 
  $D$, $\tau_c$, $T_c$ & 1 km, 200, 1 ms \\
  \hline
  $\sigma_{\text{sd}}$, B & 2 dB, 20 Mhz \\ 
 \hline
  Fronthaul parameters $\nu$, $C_{\text{fh}}$ & $2$, $100$ Mbps \\
  \hline 
  $\gamma_{\text{RI}}$, $\text{PL}_{\text{RI}}$ (in dB) & $-20$, $-81.1846$ \\
  \hline 
  $P_{\text{ft}}, P_{0,m}, P_{\text{tc},m} = P^{d}_{\text{tc},k} = P^{u}_{\text{tc},l}, p_t$ (in W) & 10, 0.825, 0.2, 0.2 \\
  \hline
  $N_0, \alpha_m = \alpha'_l$ & -121.4 dB, 0.4 \\
%[1ex] 
 \hline
\end{tabular}
\caption{FD CF mMIMO system and power consumption parameters}
%\vspace{-0.1cm}
\label{sysparams}
\end{table}
\iffalse
We, henceforth, denote the transmit powers on the downlink and uplink as $p_{d}$ $(= \rho_{d} N_0)$ and $p_{u}$ $(= \rho_{u} N_0)$, respectively. We fix, unless mentioned otherwise, the following power consumption parameters, similar to~\cite{CellFreeEE,CellFreeEEUQ}, as:  peak traffic-dependent fronthaul power consumption $P_{\text{ft}} = 10$~W, fixed fronthaul power consumption $P_{0,m} = 0.825$ W, power consumed by transceiver chains $P_{\text{tc},m} = P^{d}_{\text{tc},k} = P^{u}_{\text{tc},l} = 0.2$ W at the APs, downlink and uplink UEs respectively, noise power $N_0  = -121.4$ dB, power amplifier efficiencies $\alpha_m = \alpha'_l = 0.4$ at the APs and uplink UEs, respectively, and pilot transmit power $p_t (= \rho_t N_0) = 0.2$ W. 
\fi
%\vspace{-0.1cm}
\begin{figure*}[ht]
    \centering
    \begin{subfigure}{.3\textwidth}
      %\centering
      % include first image
      \centering
          \includegraphics[height = \linewidth, width=\linewidth]{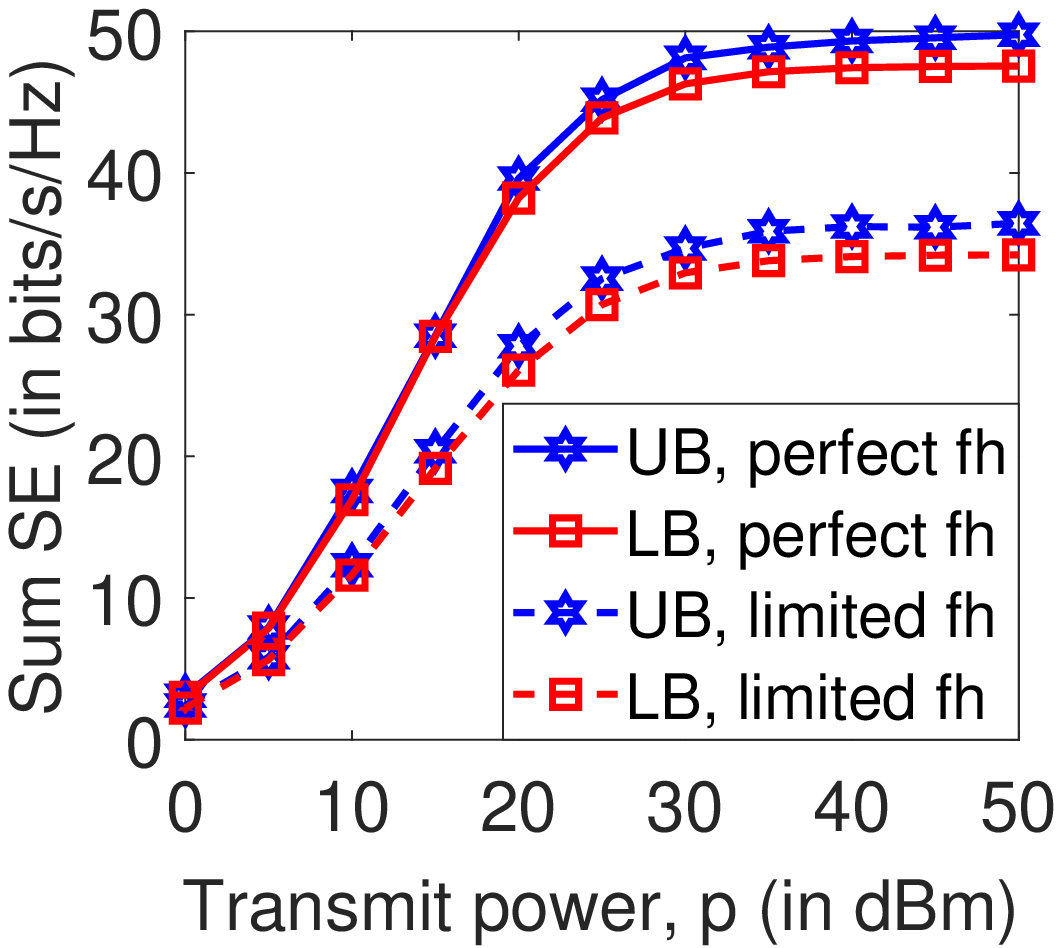}
          \caption{} 
          \label{fig:2a}
    \end{subfigure}
    \begin{subfigure}{.3\textwidth}
		\centering
		\includegraphics[height = \linewidth, width=\linewidth]{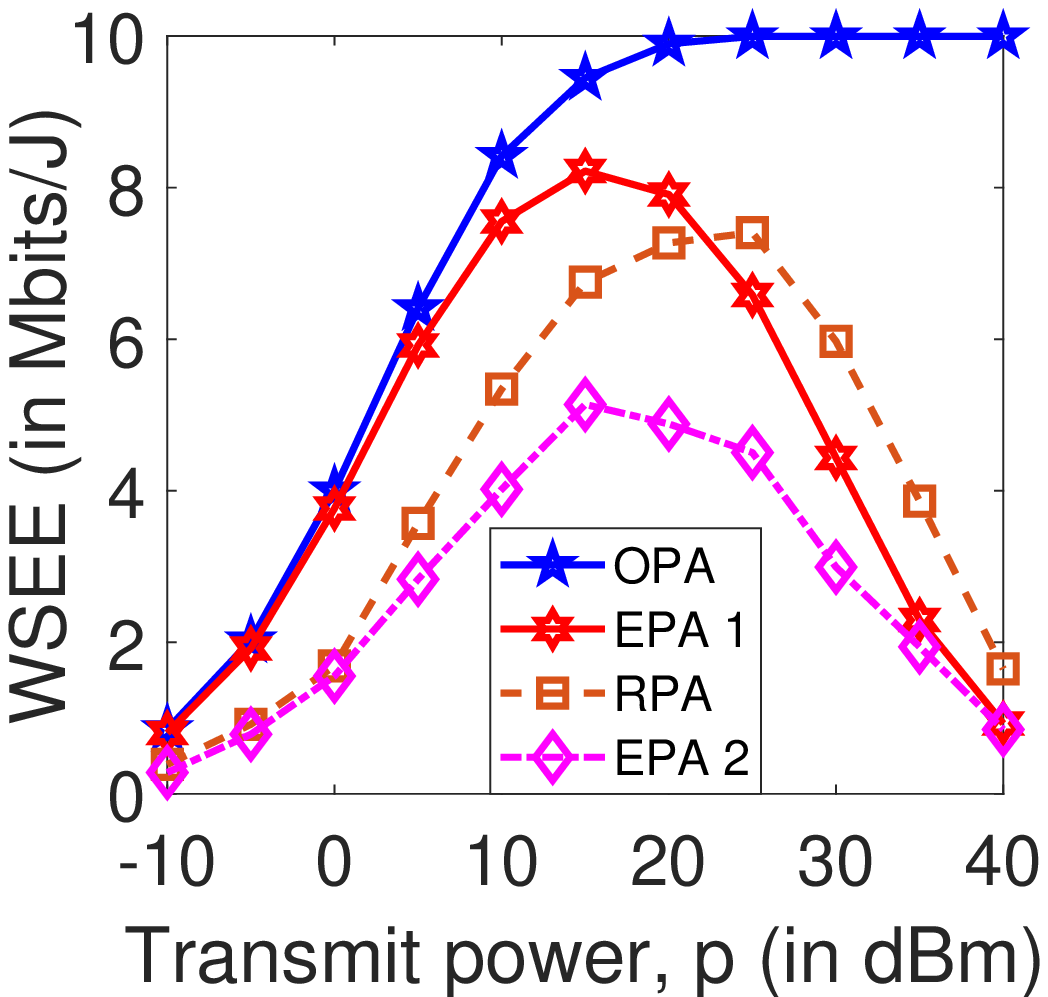}  
		\caption{} 
		\label{fig:6a}
	\end{subfigure}
	 \begin{subfigure}{.3\textwidth}
		\centering
		\includegraphics[height = \linewidth, width = \linewidth]{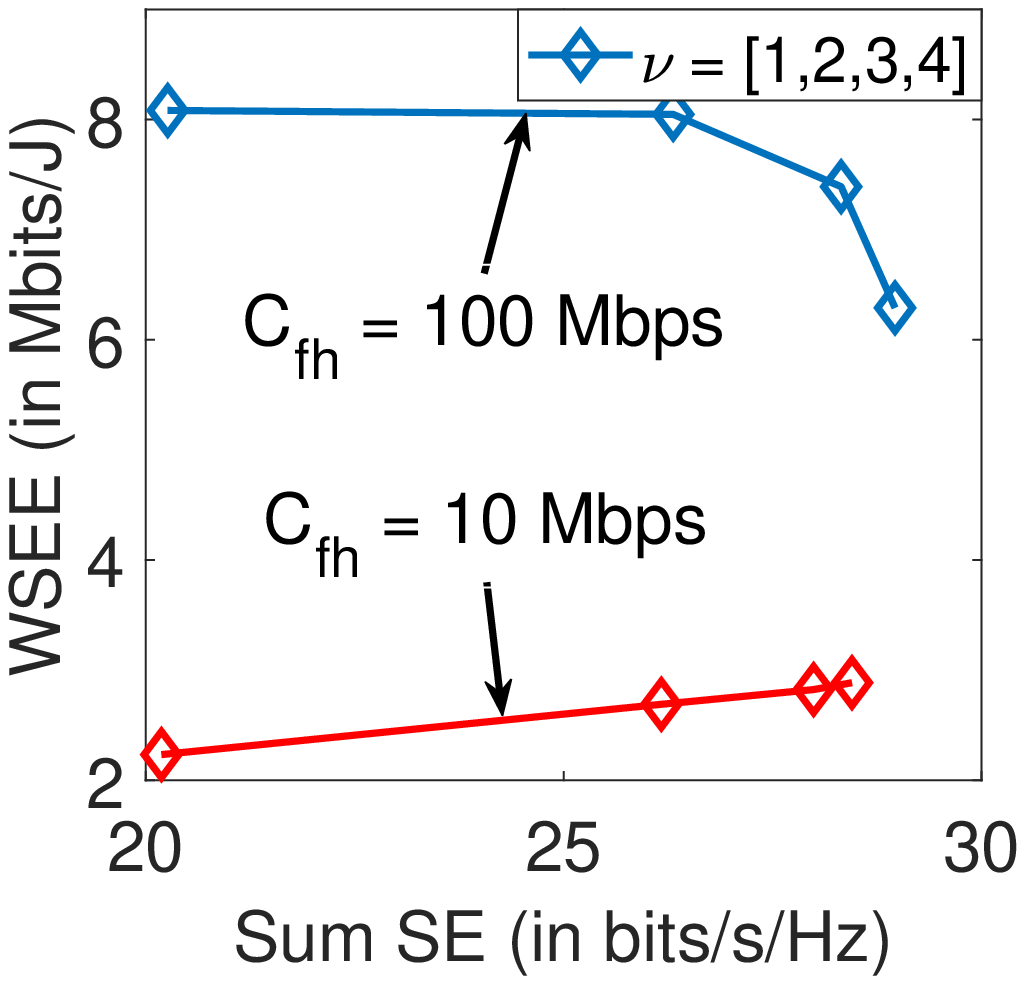}
		\caption{}
		\label{fig:7b}
	\end{subfigure}
    \iffalse
    \begin{subfigure}{.3\textwidth}
      \centering
      % include second image
      \includegraphics[height = \linewidth, width=\linewidth]{}
      \caption{}
      \label{fig:2b}
    \end{subfigure}
    \begin{subfigure}{.3\textwidth}
      \centering
      % include second image
      \includegraphics[height = \linewidth, width=\linewidth]{}
      \caption{}
      \label{fig:2c}
    \end{subfigure}
    \fi
\vspace{-0.4cm}
\caption{a) Sum SE (analytical LB and ergodic UB) vs transmit power with $M = 32, N_t = N_r = 2, K_d = K_u = 10$;  (b) WSEE vs maximum transmit power and (c) WSEE vs sum SE by varying $\nu = 1 \text{ to } 4$, with $M = 32, K_d = K_u = 10, N_t = N_r = 2$ and $S_{ok} = S_{ol} = 0.1$ bits/s/Hz; %Number of quantization bits, with $M = 32, K_d = K_u = 10, N_t = N_r = 2$ and $S_{ok} = S_{ol} = 0.1$ bits/s/Hz.
\vspace{-0.6cm}}
\label{fig:2}
\end{figure*}
%We plot in Fig. \ref{fig:2a} both the  lower-bound  and ergodic upper-bound of the SE against the transmit power $p$. We take equal transmit power for uplink and downlink data transmission, i.e., $p_d = p_u = p$. 

\vspace{-0.2cm}
\textbf{Validation of achievable SE expressions:} We consider an FD CF mMIMO system with $M \!\!=\!\! 32$ APs, each having $N_t\!\! = \!\!N_r \!\! = \!\!2$ transmit and receive antennas, and $K_d \!\!=\!\! K_u \!\! = \!\! 10$ uplink and downlink UEs and consider equal transmit power for uplink and downlink data transmission, i.e., $p_d = p_u = p$. We verify in Fig.~\ref{fig:2a} the tightness of the derived SE lower bound in \eqref{dlrate}-\eqref{ulrate}, labeled as LB, by comparing it with the numerically obtained ergodic SE %in \eqref{ergodrate}, 
labeled as upper-bound (UB) as it requires instantaneous CSI. The large-scale fading coefficients are set to unity, i.e., $\beta^{d}_{mk}\!\! = \!\! \beta^{u}_{ml} \!\!=\!\! 1, \forall k\!\! \in\!\! \kappa_{dm}, l \!\!\in \!\! \kappa_{um}$. We, similar to \cite{CFvsSmallCells}, allocate equal power to all downlink UEs  and full power to all uplink UEs, i.e., $\eta_{mk}\!\! =\!\! \left({b}N_t\left(\sum_{k \in \kappa_{dm}} \!\! \gamma^{d}_{mk}\right)\right)^{-1}, \forall  k \!\! \in \!\! \kappa_{dm}$ and $\theta_l \!=\! 1$. We consider two cases: i) perfect high-capacity fronthaul  with  $\Tilde{a} \! =\! \Tilde{b} \!=\! 1$, and ii) limited fronthaul links, with $\nu \!=\! 2$ quantization bits and capacity $C_{\text{fh}} \!=\! 10$ Mbps. We see that  the derived lower bound is tight for both {perfect and limited} fronthauls. With limited fronthaul, the sum SE significantly drops, as quantization distorts the signal and limits the number of UEs that each AP can serve (see~\eqref{maxUEs}).%due to insufficient fronthaul capacity,

\textbf{WSEE maximization - optimal system parameters:} 
We now study  WSEE variation with system parameters. We consider $M \!\!=\!\! 32$ APs each having $N_t \!\!=\!\! N_r \!\!=\!\! 2$ %transmit and receive
antennas, $K_u \!\!=\!\! K_d \!\!=\!\! 10$ downlink and uplink UEs and QoS constraints $S_{ok} \!\!=\!\! S_{ol} \!\!=\!\! 0.1$ bits/s/Hz, unless mentioned otherwise.
%and thereby obtain crucial insights into designing an energy-efficient FD CF mMIMO system
%\begin{figure}[ht]
%    \centering
%    \begin{subfigure}{.32\textwidth}
%      \centering
%          \includegraphics[height = \linewidth, width=\linewidth]{}  
%          \caption{} 
%          \label{fig:6a}
%    \end{subfigure}
%    \begin{subfigure}{.32\textwidth}
%      \centering
%        \includegraphics[height = \linewidth, width = \linewidth]{}
%        \caption{}
%        \label{fig:6b}
%    \end{subfigure}
%\caption{WSEE vs (a)Maximum Transmit Power, (b) Number of APs antennas}
%\label{fig:6}
%\end{figure}

We plot in Fig. \ref{fig:6a} the WSEE versus transmit power $p$, which is taken to be equal for downlink and uplink, i.e., $p_d \!=\! p_u \!=\! p$. We consider optimal power allocation (OPA) in Algorithm~\ref{algo1}, labeled as ``OPA", and compare {it} with three sub-optimal power allocation schemes: i) equal power allocation of type 1, labeled as ``EPA 1", where $\eta_{mk} \!\!=\!\! \left({b}N_t\left(\sum_{k \in \kappa_{dm}}\!\!\!\! \gamma^{d}_{mk}\right)\right)^{-1}, \forall k \!\! \in \!\! \kappa_{dm}$ and $\theta_l \!=\! 1$~\cite{CellFreeEE, CellFreeEEUQ}, ii) equal power allocation of type 2, labeled as ``EPA 2", where $\eta_{mk}\!\! =\!\! \left({b} N_t K_{dm} \gamma^{d}_{mk}\right)^{-1}, \forall k \!\!\in\!\! \kappa_{dm}$ and $\theta_l \!=\! 1$~\cite{CellFreeEE}, and iii) random power allocation, labeled as ``RPA", where power control coefficients are chosen randomly from a uniform distribution between $0$ and the ``EPA 1" value. {We note that existing literature has not yet optimized the WSEE metric for CF mMIMO systems, hence we can only compare with above sub-optimal schemes. We note that our proposed approach far outperforms the baseline schemes.}

{
We next quantify in Fig. \ref{fig:7b} the joint variation of WSEE and the sum SE, with the number of quantization bits $\nu$, in the fronthaul links. We note that the WSEE is calculated using Algorithm~\ref{algo1}. We consider transmit power $p_d = p_u = p = 30$ dBm %to achieve peak value of WSEE. We 
and take two different cases: i) high fronthaul capacity, $C_{\text{fh}} = 100$ Mbps, which is sufficiently high to support all the UEs, and ii) limited fronthaul capacity, $C_{\text{fh}} = 10$ Mbps, which limits the number of UEs a single AP can serve. We observe that for $C_{fh} = 100$ Mbps, the WSEE falls with increase in $\nu$, even though the corresponding sum SE increases. For $C_{fh} = 10$ Mbps, the sum SE and the WSEE simultaneously increase with increase in $\nu$. To explain this behaviour, we note that increasing $\nu$ improves the sum SE for both the cases due to a reduction in the quantization attenuation and distortion. For $C_{fh} = 100$ Mbps, the APs serve all the UEs, i.e., $K_{dm} = K_d$ and $K_{um} = K_u$, so increasing $\nu$ linearly increases the fronthaul data rate, $R_{fh}$ (see~\eqref{fhrate}). This, as seen from~\eqref{fixpower}, increases the traffic-dependent fronthaul power consumption. Using lower number (1-2) of quantization bits is therefore more energy-efficient, as it provides sufficiently good SE with a low energy consumption. However, for $C_{fh} = 10$ Mbps, $K_{um}$ and $K_{dm}$ have an upper limit, given by~\eqref{maxUEs}, which is inversely related to $\nu$. The product, $\nu (K_{um} + K_{dm})$, remains nearly constant for all values of $\nu$. Thus, $R_{fh}$ (see~\eqref{fhrate}) doesn't increase with increase in $\nu$ and remains close to the capacity, $C_{fh}$. The traffic-dependent fronthaul power consumption, given in~\eqref{fixpower}, hence, remains close to $P_{\text{ft}}$. A higher number (3-4) of quantization bits therefore provides a higher sum SE and hence, also maximizes the WSEE.
}

\vspace{-0.1cm}
\section{Conclusion}
\vspace{-0.1cm}
We derived the achievable SE expressions for a FD CF mMIMO wireless system with   fronthaul quantization. We optimized the non-convex WSEE using SCA framework which in each iteration solves a GCP. %either centrally or decentrally using ADMM. We also numerically investigated how WSEE incorporates EE requirements of different UEs. We demonstrated the convergence of  decentralized algorithm both analytically and numerically. We showed that it achieves the same WSEE as the centralized approach with a much reduced computational complexity. Finally,
%\colr{We numerically demonstrated the tightness of the derived achievable SE expressions. We numerically show that with limited fronthaul, the system moves towards user-centricity for higher WSEE.} 
We numerically investigated the WSEE dependence  on various system parameters which gives important insights to design energy-efficient systems. %demonstrate the dependence of the WSEE metric on various system parameters, which gives important insights to design energy-efficient systems.} %This work is crucial for 5G mMIMO architecture as it utilizes decentralized nodes to handle large number of UEs.% Future research topics in this direction may include considering FD UEs, investigating FD CF at mmWave and THz frequencies and employing deep learning approaches to solve the WSEE maximization problem.
%%---------------------------------------------------------------
%\begin{figure}[tbp]
%    \centering
%    \includegraphics[width=0.8\linewidth]{Figures/plot}
%  \caption{\small Figure caption.}
%    \label{figure1_label}
%  \end{figure}
%%---------------------------------------------------------------
%
%%-------------------------------------------------------------------------
%\begin{figure}[tbp]
%    \centering
%  \begin{subfigure}[b]{.49\linewidth}
%    \includegraphics[width=\linewidth,height=\linewidth]{Figures/plot}
%    \caption{\small }
%    \label{figure_labela}
%  \end{subfigure}
%  \begin{subfigure}[b]{0.49\linewidth}
%    \includegraphics[width=\linewidth,height=\linewidth]{Figures/plot}
%    \caption{\small }
%    \label{figure_labelb}
%  \end{subfigure}
%  \caption{\small Figure caption.}
%\label{figure_label}
%\end{figure}
%%------------------------------------------------------------------------
\appendices 
\section{} \label{SINRterms}
%\vspace{-0.2cm}
{
We now derive the achievable SE  for the $k$th downlink UE in~\eqref{dlrate} and the $l$th uplink UE in~\eqref{ulrate}. We use $\varepsilon \!\triangleq\! \{d,u\}$ to denote downlink and uplink, respectively; $\xi \!\triangleq\! \{t,r\}$ to denote transmit and receive, respectively; $\phi \!\triangleq\! \{k,l\}$ to denote the $k$th downlink UE and the $l$th uplink UE, respectively; and $\upsilon^{\varepsilon}_{m\phi} \!\triangleq\! \{\eta_{mk} \text{ for } \phi\!=\!k, \theta_{l} \!\text{ for }\! \phi\!=\!l\}$. From Section~\ref{ul_ch_est}, we know that $\bm{g}^{\varepsilon}_{m\phi} = \hat{\bm{g}}^{\varepsilon}_{m\phi} + \bm{e}^{\varepsilon}_{m\phi}$, where $\hat{\bm{g}}^{\varepsilon}_{m\phi}$ and $\bm{e}^{\varepsilon}_{m\phi}$ are independent and $\mathbb{E}\{\|\hat{\bm{g}}^{\varepsilon}_{m\phi}\|^{2}\} = N_{\xi}\gamma^{\varepsilon}_{m\phi}$.} 
{We express the desired signal for the $k$th downlink UE and the $l$th uplink UE as 
%\vspace{-0.2cm}
\begin{small}
\begin{align}
    \notag \mathbb{E}\{|\text{DS}^{\varepsilon}_{\phi}|^{2}\} &= \! \Tilde{a}^{2} \rho_{\varepsilon} \mathbb{E}\{\Big| \!\!\sum_{m \in \mathcal{M}^{\varepsilon}_{\phi}} \!\! \sqrt{\upsilon^{\varepsilon}_{m\phi}} \mathbb{E}\{(\hat{\bm{g}}^{\varepsilon}_{m\phi})^{T} (\hat{\bm{g}}^{\varepsilon}_{m\phi})^{*}\} s^{\varepsilon}_{\phi}\Big|^{2}\} \nonumber \\
    &= \Tilde{a}^{2} N^2_{\xi} \rho_{\varepsilon} (\sum_{m \in \mathcal{M}^{\varepsilon}_{\xi}} \!\! \sqrt{\upsilon^{\varepsilon}_{m\phi}} \gamma^{\varepsilon}_{m\phi})^{2}. \label{dl_DS} 
\end{align}
\end{small}
%\vspace{-0.2cm}
%Expressing the variance of sum of independent random variables as the sum of their variances, 
We now calculate the beamforming uncertainty for the $k$th downlink UE and the $l$th uplink UE as $\mathbb{E}\{|\text{BU}^{\varepsilon}_{\phi}|^{2}\}$
%\vspace{-0.2cm}
\begin{small}
\begin{align}
\notag  \!&=\! \Tilde{a}^{2} \rho_{\varepsilon}\!\!\!\! \sum_{m \in \mathcal{M}^{\varepsilon}_{\phi}} \!\!\!\! \mathbb{E}\{|\sqrt{\upsilon^{\varepsilon}_{m\phi}}((\bm{g}^{\varepsilon}_{m\phi})^{T} (\hat{\bm{g}}^{\varepsilon}_{m\phi})^{*} - \mathbb{E}\{(\bm{g}^{\varepsilon}_{m\phi})^{T} (\hat{\bm{g}}^{\varepsilon}_{m\phi})^{*})\})|^2\} \\
& \!\stackrel{(a)}{=} \! \Tilde{a}^{2} \rho_{\varepsilon} \!\!\!\!\! \sum_{m \in \mathcal{M}^{\varepsilon}_{\phi}} \!\!\!\!\! \upsilon^{\varepsilon}_{m\phi} \! (N_{\xi}(N_{\xi} \!+\!1) (\gamma^{\varepsilon}_{m\phi})^2 \!\! + \!\! N_{\xi} \gamma^{\varepsilon}_{m\phi}(\beta^{\varepsilon}_{m\phi} \!\! - \!\! \gamma^{\varepsilon}_{m\phi}) \!\! - \!\! (N_{\xi}\gamma^{\varepsilon}_{m\phi})^2) \nonumber \\
&= \Tilde{a}^{2} N_{\xi} \rho_{\varepsilon} \sum_{m \in \mathcal{M}^{\varepsilon}_{\phi}} \upsilon^{\varepsilon}_{m\phi} \beta^{\varepsilon}_{m\phi} \gamma^{\varepsilon}_{m\phi}. \label{dl_BU} 
\end{align}
\end{small}
%\vspace{-0.2cm}
Equality $(a)$ is obtained by i) using $\mathbb{E}\{|s^{\varepsilon}_{\phi}|^2\} = 1$; ii) substituting $\bm{g}^{\varepsilon}_{m\phi} = \hat{\bm{g}}^{\varepsilon}_{m\phi} + \bm{e}^{\varepsilon}_{m\phi}$;  ii) using the fact that $\hat{\bm{g}}^{\varepsilon}_{m\phi}$ and $\bm{e}^\varepsilon_{m\phi}$ are zero-mean and uncorrelated; v) using the results $\mathbb{E}\{\|\hat{\bm{g}}^{\varepsilon}_{m\phi}\|^{4}\} = N_{\xi}(N_{\xi}+1) (\gamma^{\varepsilon}_{m\phi})^{2}$~\cite{RandomMatrix} and $\mathbb{E}\{\|\bm{e}^{\varepsilon}_{m\phi}\|^{2}\} = (\beta^{\varepsilon}_{m\phi} - \gamma^{\varepsilon}_{m\phi})$.} 
{The powers of the remaining interference terms (MUI, RI, UDI), noise terms (N) and the quantization distortion terms (TQD), can be obtained using similar calculations as above, which are omitted for the sake of space. 
}
\bibliographystyle{IEEEtran}
%\vspace{-0.4cm}
\bibliography{IEEEabrv,paper_template_ref}

% Generated by IEEEtran.bst, version: 1.14 (2015/08/26)
\begin{thebibliography}{10}
\providecommand{\url}[1]{#1}
\csname url@samestyle\endcsname
\providecommand{\newblock}{\relax}
\providecommand{\bibinfo}[2]{#2}
\providecommand{\BIBentrySTDinterwordspacing}{\spaceskip=0pt\relax}
\providecommand{\BIBentryALTinterwordstretchfactor}{4}
\providecommand{\BIBentryALTinterwordspacing}{\spaceskip=\fontdimen2\font plus
\BIBentryALTinterwordstretchfactor\fontdimen3\font minus
  \fontdimen4\font\relax}
\providecommand{\BIBforeignlanguage}[2]{{%
\expandafter\ifx\csname l@#1\endcsname\relax
\typeout{** WARNING: IEEEtran.bst: No hyphenation pattern has been}%
\typeout{** loaded for the language `#1'. Using the pattern for}%
\typeout{** the default language instead.}%
\else
\language=\csname l@#1\endcsname
\fi
#2}}
\providecommand{\BIBdecl}{\relax}
\BIBdecl

\bibitem{CFvsSmallCells}
H.~Q. Ngo, A.~Ashikhmin, H.~Yang, E.~G. Larsson, and T.~L. Marzetta,
  ``Cell-free massive {MIMO} versus small cells,'' \emph{{IEEE} Trans. Wireless
  Commun.}, vol.~16, no.~3, pp. 1834 -- 1850, Mar. 2017.

\bibitem{CellFreeUC}
S.~Buzzi and C.~D’Andrea, ``Cell-free massive {MIMO}: User-centric
  approach,'' \emph{{IEEE} Wireless Commun. Lett.}, vol.~6, no.~6, pp. 706 --
  709, Aug. 2017.

\bibitem{FDCellFree}
T.~T. Vu, D.~T. Ngo, H.~Q. {Ngo}, and T.~Le-Ngoc, ``Full duplex cell-free
  massive {MIMO},'' in \emph{2019 {IEEE} Int. Conf. on Commun.({ICC}'19)},
  2019.

\bibitem{NAFDCellFree}
D.~{Wang}, M.~{Wang}, P.~{Zhu}, J.~{Li}, J.~{Wang}, and X.~{You}, ``Performance
  of network-assisted full-duplex for cell-free massive {MIMO},'' \emph{{IEEE}
  Trans. Commun.}, vol.~68, no.~3, pp. 1464--1478, 2020.

\bibitem{FDCellFree2}
H.~V. {Nguyen}, V.~{Nguyen}, O.~A. {Dobre}, S.~K. {Sharma}, S.~{Chatzinotas},
  B.~{Ottersten}, and O.~{Shin}, ``On the spectral and energy efficiencies of
  full-duplex cell-free massive {MIMO},'' \emph{{IEEE} J. Sel. Areas Commun.},
  Jun. 2020, early Access.

\bibitem{CellFreeMaxMinUQ}
M.~Bashar, K.~Cumanan, A.~G. Burr, H.~Q. Ngo, M.~Debbah, and P.~Xiao,
  ``Max–min rate of cell-free massive {MIMO} uplink with optimal uniform
  quantization,'' \emph{{IEEE} Trans. Commun.}, vol.~67, no.~10, pp. 6796 --
  6815, Oct. 2019.

\bibitem{CellFreeEEUQ}
M.~Bashar, K.~Cumanan, A.~G. Burr, H.~Q. Ngo, E.~G. Larsson, and P.~Xiao,
  ``Energy efficiency of the cell-free massive {MIMO} uplink with optimal
  uniform quantization,'' \emph{{IEEE} Trans. on Green Commun. and
  Networking(Early Access)}, 2019.

\bibitem{fractionalprogrammingbook}
A.~{Zappone} and E.~{Jorswieck}, \emph{Energy Efficiency in Wireless Networks
  via Fractional Programming Theory}.\hskip 1em plus 0.5em minus 0.4em\relax
  Now Foundations and Trends, Delft, Netherlands, 2015.

\bibitem{CellFreeEE}
H.~Ngo, L.~Tran, T.~Q. Duong, M.~Matthaiou, and E.~G. Larsson, ``On the total
  energy efficiency of cell-free massive {MIMO},'' \emph{{IEEE} Trans. on Green
  Commun. and Networking}, vol.~2, no.~1, pp. 25 -- 39, Mar. 2018.

\bibitem{WSEEEfrem}
C.~N. {Efrem} and A.~D. {Panagopoulos}, ``A framework for weighted-sum energy
  efficiency maximization in wireless networks,'' \emph{{IEEE} Wireless Commun.
  Lett.}, vol.~8, no.~1, pp. 153--156, 2019.

\bibitem{boyd2004convex}
S.~Boyd and L.~Vandenberghe, \emph{Convex optimization}.\hskip 1em plus 0.5em
  minus 0.4em\relax Cambridge university press, 2004.

\bibitem{RandomMatrix}
A.~M. Tulino and S.~Verdu, ``Random matrix theory and wireless
  communications,'' \emph{Foundations and Trends in Communications and
  Information Theory}, vol.~1, no.~1, pp. 1 -- 182, 2004.

\end{thebibliography}
\end{document}